\documentclass[12pt,a4paper]{article}

\usepackage[a4paper, left=.5in, right=.5in, top=1in, bottom=1in]{geometry}
\usepackage{amsmath,amssymb,amsfonts}
\usepackage{amsthm}
\usepackage{graphicx}
\usepackage{xcolor}
\usepackage{cleveref}
\usepackage{caption}
\AtBeginDocument{\captionsetup[figure]{font=small}}
\usepackage{subcaption}
\usepackage{braket}
\usepackage{marginnote}
\usepackage{todonotes}
\usepackage{cases}
\usepackage{enumitem}
\usepackage{makecell}
\usepackage{array}
\usepackage{makecell}
\usepackage{authblk}

\newcolumntype{C}[1]{>{\centering\arraybackslash}m{#1}}

\title{Barbell Codes: \\qLDPC Codes for Superconducting Quantum Hardware}

\author[1]{\small Shin Ho Choe\textsuperscript{\textdaggerdbl}\thanks{\texttt{shinho.choe@iqm.tech}}}
\author[1]{Vincent Steffan\textsuperscript{\textdaggerdbl}\thanks{\texttt{vincent.steffan@iqm.tech}}}
\author[1]{Florian Vigneau}
\author[2]{Pedro Parrado-Rodr\'iguez}
\author[1]{Hsiang-Sheng Ku}
\author[1]{Martin Leib}
\author[1]{Francisco Revson Fernandes Pereira}
\author[1]{Fedor \v{S}imkovic IV}

\affil[1]{  IQM Quantum Computers, Georg-Brauchle-Ring 23-25, 80992 Munich, Germany
}
\affil[2]{
  IQM Quantum Computers, P.\ de la Castellana 200, 28046 Madrid, Spain
}

\date{\today}

\newcommand{\nlc}{near-local coupler}
\newcommand{\nlcs}{\nlc s}
\newcommand{\DHex}{six-qubit star lattice}
\newcommand{\move}[0]{\ensuremath{\mathrm{MOVE}}}
\newcommand{\cz}[0]{\ensuremath{\mathrm{CZ}}}

\newcommand{\sR}[0]{\ensuremath{\mathsf{R}}}
\newcommand{\sRX}[0]{\ensuremath{\mathsf{RX}}}

\newcommand{\sCNOT}[0]{\ensuremath{\mathsf{CNOT}}}
\newcommand{\sM}[0]{\ensuremath{\mathsf{M}}}
\newcommand{\sMX}[0]{\ensuremath{\mathsf{MX}}}
\newcommand{\ztype}{\ensuremath{\mathrm{Z}}}
\newcommand{\xtype}{\ensuremath{\mathrm{X}}}

\DeclareMathOperator{\supp}{supp}
\DeclareMathOperator{\modular}{mod}

\newtheorem{theorem}{Theorem}
\newtheorem{lemma}[theorem]{Lemma}
\newtheorem{corollary}[theorem]{Corollary}
\newtheorem{remark}{Remark}

\begin{document}

\maketitle
\begingroup
\renewcommand{\thefootnote}{\textdaggerdbl}
\footnotetext{These authors contributed equally to this work.}
\endgroup

\abstract{The major challenge on the way to fault-tolerant quantum computing comes from the insufficient quality of hardware components and the difficulty of scaling their number without further compromising fidelity. Quantum Low-Density Parity-Check (qLDPC) codes offer a promising solution by encoding logical qubits with low overhead and at a comparatively high code distance. However, it remains an open question how to scalably implement efficient qLDPC codes on fixed-connectivity quantum chips without increasing hardware complexity to enable the non-local interactions in their underlying QEC cycles. We resolve this challenge for the first time by introducing a family of qLDPC ``barbell’' codes accompanied by a realistic chip layout that natively supports all required two-qubit interactions. Crucially, the hardware complexity required to implement barbell codes remains constant as code distance increases. We provide a detailed investigation into the feasibility of all required hardware components and simulate a specific family of barbell codes against circuit-level noise. We find that, with a modest overhead of $<30$ data qubits per logical qubit, barbell codes can preserve information at a physical noise strength of $10^{-4}$ for several trillion QEC cycles. Simulations of logical multi-Pauli measurements, performed with circuits tailored to the chip, yield similar logical performance per QEC round, indicating that entangling gates between logical qubits in barbell codes can be realized fault-tolerantly.}

\vspace{1cm}

Quantum computing raises the hope of efficiently solving problems beyond the reach of classical computers. Recently, substantial progress has been made on finding efficient quantum algorithms for a variety of problems, including cryptography~\cite{webster2026pinnaclearchitecturereducingcost,cain2026shorsalgorithmpossible10000}, data processing~\cite{zhao2026exponentialquantumadvantageprocessing}, and chemistry~\cite{alexeev2025perspectivequantumcomputingapplications}. In order to execute these algorithms, a quantum computer needs to be sufficiently robust against errors, as interesting computations often require the application of billions or even trillions of (non-trivial) quantum gates. All currently known types of quantum hardware have error rates far higher than those required for reliable computations at such a scale. Consequently, the use of quantum error correction (QEC) appears unavoidable for achieving error rates sufficiently low for practical quantum computation.

QEC codes encode $k$ logical qubits into $n$ physical qubits, where $n >k$. This introduces redundancy, which can be used to correct errors that occurred on the way. In order to protect the encoded information in practice, one has to perform so-called QEC cycles implemented by executing two-qubit interactions between the \textit{data qubits} and some additional \textit{syndrome qubits}, which are then measured to obtain the syndrome information required to apply the appropriate correction. One key challenge of using QEC codes efficiently is that the device's connectivity must natively support all two-qubit interactions necessary to read out the syndrome information. 

\begin{figure}
    \centering
    \includegraphics[width=\linewidth]{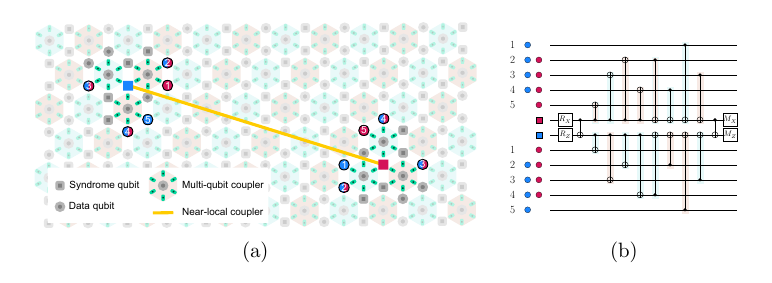}
    \caption{In (a), we give a snapshot of the chip layout for the Barbell architecture. Grey elements are qubits: Square-shaped ones indicate syndrome qubits, circle-shaped ones inside the blue and red hexagons are central elements, and the remaining ones are data qubits. In this way, two-qubit interactions between syndrome and data qubits are realized via central elements. We highlight a pair of \xtype- and \ztype- syndrome qubits together with their supports in red, resp.\ blue. The two syndrome qubits are connected using a \nlc{}. In (b), we depict a possible depth-12 syndrome extraction circuit for the pair of \xtype- and \ztype- stabilizers using superdense syndrome extraction. Using central elements corresponding to hexagons of different colors at each timestep, we ensure that the syndrome extraction circuit can be executed in parallel for all pairs of \xtype- and \ztype- stabilizers.}
    \label{fig:chip_and_superdense}
\end{figure}

Superconducting qubit hardware is a particularly attractive quantum modality thanks to its fast qubit operations: QEC memory experiments below threshold with a syndrome extraction cycle time of the order of $1$ $\mu s$ have been recently demonstrated on hardware~\cite{google2023suppressingerror,google2025belowthreshold,he2025belowthreshold,lacroix2025colorcode}. Similar to all solid-state qubit platforms, they provide a static qubit connectivity graph, chosen at the manufacturing process of the processor. Many different academic groups and companies have realized chips with planar square-grid~\cite{IQMGarnetWhitePaper, arute2019quantum, Krinner_2022} or heavy-hex~\cite{hetenyi2024heavyhex} connectivity, which natively support all two-qubit interactions required for implementing the QEC cycle of the surface code~\cite{McEwen_2023,krinner2022surface17} or the (6.6.6) color code~\cite {gidney2023newcircuitsopensource}. However, scaling up the number of logical qubits encoded into those codes is challenging due to their low encoding rate: a surface code architecture capable of reliably running real-world applications will have an overhead of hundreds to thousands of physical qubits per logical qubit. This renders the corresponding manufacturing processes extremely challenging from both the engineering and economic perspectives.

This incited the community to consider a more general family of QEC codes, called \textit{quantum Low-Density Parity-Check (qLDPC) codes}~\cite{Breuckmann_2021}. For all qubits in the qLDPC code family, each physical qubit is contained in at most a constant number of stabilizer measurements, and each stabilizer acts on at most a constant number of qubits. In order to realize the QEC cycle of any QEC code, one needs to perform two-qubit gates between a syndrome qubit and the data qubits contained in the corresponding stabilizer check. For qLDPC codes, one typically needs two-qubit gates between non-nearest-neighbor qubits that can be located far apart on the processor. Small examples of qLDPC codes have been run on quantum hardware previously. Specifically, a 32-qubit bivariate bicycle (BB) code has been implemented on a superconducting chip using \nlcs{} located on the same layer as the qubits~\cite{wang2026demonstrationofqldpc}.  For this, air bridges~\cite{chen2014fabrication,song2019generation} have been used to cross \nlcs{}, which is expected to lead to inferior hardware performance as the system size grows, particularly due to crosstalk~\cite{bu2025tantalum_airbridges}. 

Another previously proposed approach is the vertical integration of multiple chips to route \nlcs{} through many hardware layers. This approach can be facilitated thanks to recent developments in bump bonds~\cite{mit20173d_3dintegration, RigettiPhysRevApplied2024, vtt2024flipchipprocessor, wallraff2026interchipcouplers}, through-silicon vias (TSV) and interposers~\cite{mit2020tsv,yost2020solid}, as well as on-chip long-distance couplers~\cite{MarxerLongDistanceTransmonCoupler2023,heya2025randomized,xu2026tunable2026}. Despite this progress, the hardware complexity of routing numerous \nlcs{} remains a significant challenge for quantum processor manufacturing, even with sophisticated routing strategies~\cite{zhao2025simpleuniversalrouting, zhou2025louvre, mathews2026placing}. One particular difficulty is that \nlcs{} must be routed through multiple chip layers to avoid having too many air bridges~\cite{mathews2026placing}. 

In this article, we introduce \emph{barbell codes}, a highly efficient qLDPC code family that can be implemented in a realistic near-term superconducting QPU with low hardware complexity. Crucially, we present them together with a designated chip layout and a syndrome extraction cycle consisting solely of two-qubit interactions native to that layout. With those advances, we pave the way for using qLDPC codes on superconducting hardware, yielding qubit-overhead savings of up to a factor of 8 compared to the usual rotated surface code. We validate our approach by simulating a family of highly efficient barbell codes in the presence of circuit-level noise, demonstrating performance comparable to that of a surface code with the same distance. 

We also compute logical error rates per round for logical Pauli measurements for a distance-8 barbell code using the protocol introduced in~\cite{yang2025planarfaulttolerantquantumcomputation}. Here, we observe that the logical error rate is only slightly higher than that of a memory experiment, which strongly suggests that barbell codes are a good candidate for fault-tolerant quantum computation on near-term hardware. 

We now briefly describe the chip layout; for more details on the hardware components, we refer to the Supplementary Information. We propose a flip-chip layout with two layers of connectivity. The first layer hosts the qubits and local couplers while the second layer contains \nlcs{} that enable non-planar connectivity between non-adjacent qubits. Our chip layout is composed exclusively of components whose functionality has been experimentally demonstrated. We term our layout as the ``\DHex{} plus \nlc{} (6QSL+NLC) architecture'' or simply the ``Barbell architecture''. A visual representation of this architecture can be found in \Cref{fig:chip_and_superdense} (a).

\begin{figure}[t]
    \centering

    \begin{subfigure}[t]{0.48\textwidth}
        \centering
        \begin{minipage}[c][0.35\textheight][c]{\linewidth}
            \centering
            \includegraphics[width=\linewidth]{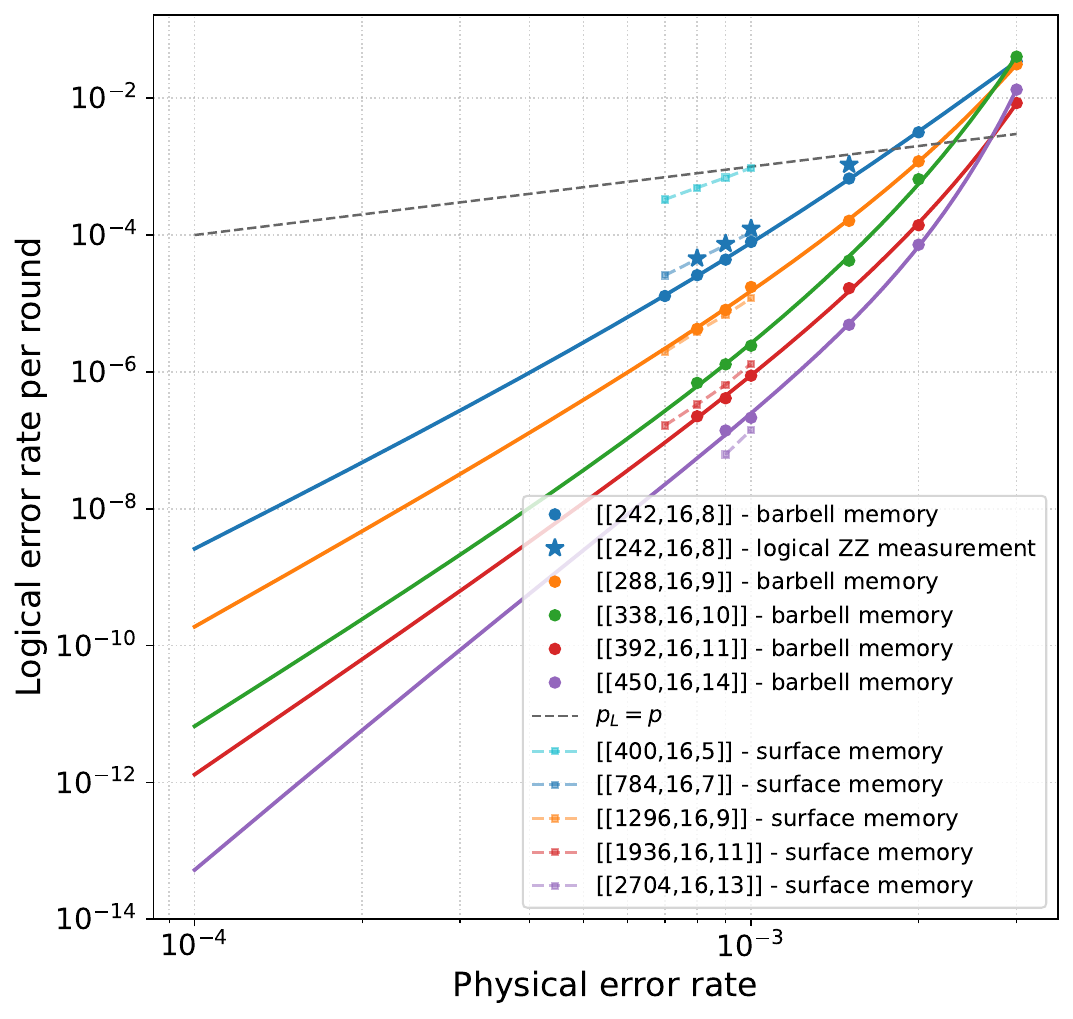}
        \end{minipage}
        \caption{}
        \label{fig:superdense}
    \end{subfigure}
    \hfill
    \begin{subfigure}[t]{0.48\textwidth}
    \small
        \centering
        \begin{minipage}[c][0.35\textheight][c]{\linewidth}
            \centering
            \begin{tabular}{l|c|c}
            \makecell{Experiment, \\Parameters}
                 & \makecell{LER \\($p = 0.0009$)}

                 & \makecell{LER \\($p = 0.001$)}\\
                \hline \hline
                
                M, $[[242,16,8]]$  & $4.4\times 10^{-5}$ & $7.9 \times 10^{-5}$ \\
                \hline
                
                M, $[[288,16,9]]$  & $8.0\times 10^{-6}$ & $1.7\times 10^{-5}$ \\
                \hline
                
                M, $[[338,16,10]]$ & $1.3\times 10^{-6}$ & $2.4\times 10^{-6}$ \\
                \hline
                
                M, $[[392,16,11]]$ & $4.1\times 10^{-7}$ & $8.8\times 10^{-7}$ \\
                \hline
                
                M, $[[450,16,14]]$ & $1.4\times 10^{-7}$ & $2.1\times 10^{-7}$ \\
                \hline
                
                L, $[[242,16,8]]$  & $7.4\times 10^{-5}$ & $1.2\times 10^{-4}$ \\
                \hline
            \end{tabular}
        \end{minipage}
        \caption{}
        \label{fig:table}
    \end{subfigure}

    \caption{Performance of a family of barbell codes under uniform circuit-level noise. All codes belong to the same family, more precisely, are barbell codes of different sizes constructed from the barbell shown in \Cref{fig:chip_and_superdense}~(a). In (a), we show performance curves for barbell codes under circuit-level noise for memory experiments (circles), logical measurement (stars), and, as a comparison, 16 patches of surface code (squares). All logical error rates are plotted per round, and estimated from simulating $d$ rounds with at least 100 failures sampled. In (b), we highlight some of the logical error rates for $p = 0.0009$ and $p = 0.001$. Here, the first column specifies the experiment type (M for memory and L for logical measurement) and the code parameters.}
    \label{fig:LERcurvesandtable}
\end{figure}

In the first layer, qubits and local couplers are organized according to the \DHex{} architecture~\cite{vigneau2025quantumerrordetectionstararchitecture}. The \DHex{} architecture consists of hexagonal cells of six qubits that form a honeycomb lattice. Inside each cell, the six qubits are coupled by a multi-mode coupler that consists of a central element connected to each of the six qubits by a tunable coupler in a star topology. Any pair of qubits within the same cell can perform a two-qubit interaction either by swapping one of the qubits to the central element before and then applying the desired interaction with the second qubit~\cite{renger2025superconductingqubitresonatorquantumprocessor}, or through effective multi-mode interaction~\cite{orell2026efficientandaccuratetwoqubitgate}. In this way, every qubit can perform two-qubit interactions with any of the 12 qubits it shares a central element or a multi-mode coupler with. Parallel two-qubit interactions are possible within the lattice as long as multi-mode couplers are involved in at most one of the two-qubit interactions at a time (i.e., one interaction per hexagonal cell). It is possible to have all qubits involved in two-qubit interactions at the same time.

The second layer hosts the \nlcs{} that connect pairs of qubits that do not share a multi-qubit coupler in the \DHex{}. \nlcs{} can be routed in a other chip layers using existing 3D integration technology~\cite{RigettiPhysRevApplied2024}. Thanks to the translational invariance of the \textit{tile codes}, all the \nlcs{} are parallel and of the same length, which enables the containment of all {\nlcs} to a single layer and significantly simplifies the routing in the chip. A calculation of the hardware complexity according to the methods of \cite{mathews2026placing} is available in the Supplementary Information. We stress that for a fixed family of barbell codes, the length of \nlcs{} is fixed and independent of the code's distance.

Let us now describe the barbell codes along with their QEC cycles. Barbell codes are a variant of \textit{tile codes} introduced in \cite{steffan2025tile, liang2025planarldpccodes}, which are a family of translationally invariant, 2D local qLDPC codes that can be constructed with flexible locality constraints. We review the general construction of tile codes in Methods. Here, we only mention that tile codes are constructed from a pair of $X$- and $Z$-tiles. Translations of these tiles across the lattice make up the set of stabilizers. At the boundary, some stabilizer tiles may be truncated.

Barbell codes are a special case of tile codes in which the stabilizers are tailored to measurement on a \DHex{} plus \nlc{} architecture. Concretely, the \nlcs{} are used exclusively to connect pairs of $X$ and $Z$ stabilizers. In this way, the syndrome information can be obtained using \textit{superdense syndrome extraction circuits}~\cite{gidney2023newcircuitsopensource}, as illustrated in \Cref{fig:chip_and_superdense}~(b). In the first step of the QEC cycle, the pair of $X$- and $Z$-type syndrome qubits is entangled using a two-qubit gate realized by the \nlc{}. In this way, both involved syndrome qubits can be used to collect syndrome information from data qubits for both associated stabilizers. Therefore, the stabilizers that can be measured natively in this architecture are contained in the union of the two neighborhoods of the two syndrome qubits. An example of such stabilizers is shown in \Cref{fig:chip_and_superdense}~(a). 

Let us now discuss our simulation setup for barbell codes. We choose to simulate barbell codes with weight-8 stabilizer generators depicted in \Cref{fig:chip_and_superdense}~(a) that exhibit particularly efficient code parameters. 
As our noise model, we choose uniform depolarizing circuit-level noise.

We investigate the performance of barbell codes under circuit-level noise for both memory and logical computation experiments: In memory experiments, a logical $\ket{\overline{0}}^{\otimes 16}$, resp. $\ket{\overline{+}}^{\otimes 16}$, state in a distance $d$ barbell code is initialized, followed by $d$ rounds of syndrome extraction using the superdense syndrome extraction circuit. Finally, the physical qubits are being measured in the corresponding basis. The collected syndrome information has been decoded using Relay-BP~\cite{mullerImprovedBeliefPropagation2025}. For real-world applications of quantum computers, one needs to be able to perform complex calculations using quantum circuits on the encoded quantum information. We demonstrate that barbell codes can do so efficiently by simulating a multi-qubit Pauli measurement on the encoded qubits in the presence of circuit-level noise. For that, we use the protocol introduced in~\cite{yang2025planarfaulttolerantquantumcomputation}, where the authors showed that logical multi-qubit Pauli measurements, both within a single patch of a tile code and between patches of tile codes, can be performed without additional connectivity requirements. Note that, considering that barbell codes belong to the tile code family, we can directly use this protocol. To measure logical Pauli operators within a patch, one needs to extend the patch horizontally, resp.\ vertically, depending on whether one wants to measure an $X$- or $Z$-type Pauli operator. To perform a joint logical Pauli measurement between two patches of tile codes, one needs to use an ancillary region between the two patches and apply a merge and a split phase, similar to standard lattice surgery. Crucially, the connectivity of the stabilizers of the two patches, merged via the ancillary system, is the same as that of the two individual tile codes. The parity of the stabilizers in the ancillary region determines the outcome of the logical Pauli measurement.

Using standard techniques from the Pauli-based computation model~\cite{paulibasedcomputation}, one can show that joint Pauli measurements are sufficient to perform any CNOT between logical qubits. It is therefore enough to verify that these joint Pauli measurement operations are fault-tolerant. We describe the technical details of this protocol in the Methods section. To ensure that joint Pauli measurements are fault-tolerant, we conduct two types of experiments for each measurement within one patch of the barbell code. Since the setup is symmetric, we restrict ourselves to measuring logical $Z$ operators. In one experimental setup, we initialize the data qubits of the codes in the $Z$-basis. Performing $d$ rounds of stabilizer measurements results in the qubits being in a logical all-$\ket{\overline{0}}$ state. We then apply a circuit measuring one of the logical qubits in the $Z$-basis. This should leave all other logical qubits untouched; the outcome of this measurement should be $+1$, and the logical qubit itself should be in the logical $\ket{\overline{0}}$ state. By measuring all physical qubits in the $Z$ basis at the end of the experiment, we verify that they are all in the $\ket{\overline{0}}$-state. We also include the product of ancillary stabilizers, making up the result of the logical measurement, as a condition for the experiment to succeed. In a second experiment, we initialize all data qubits of the code in the $X$-basis, resulting -- after the stabilizer measurement -- in an encoded all-$\ket{\overline{+}}$-state. In this setup, the measurement result in the $Z$ basis is random. The unmeasured logical qubits, however, should not be affected by this operation. We thus verify whether this experiment has succeeded by measuring all data qubits at the end of the circuit in the $X$-basis and checking whether the unmeasured qubits are being affected. 

In \Cref{fig:LERcurvesandtable}, we present our main findings. Remarkably, we reach the teraquop regime with a distance-14 barbell code with a physical error rate above $10^{-4}$, enabling several trillion QEC rounds. We observe that, for physical error rates at most $10^{-3}$, barbell codes of distance $d$ are on par with surface code patches of comparable distance, while saving up to a factor of $7.0$ in qubit overhead. For example, in \Cref{fig:LERcurvesandtable} (a), we can see that, using 400 data qubits to encode 16 logical qubits at a realistic near-term error rate of $10^{-3}$, one could either implement $16$ patches of distance-5 surface code or one patch of a distance-11 barbell code: In this set-up the logical error rate using a barbell code is nearly three orders of magnitude lower than the one of surface codes. More precisely, while a distance-5 surface code yields a logical error rate per round of $9.6\times 10^{-4}$, a distance-11 barbell code utilizing the same number of data qubits yields a logical error rate per round of $8.8\times 10^{-7}$. This showcases that for near-term hardware with a limited availability of physical qubits, barbell codes already offer a drastic improvement over surface codes. We also observe that the per-round logical error rate for the distance-8 logical measurement is only slightly larger than in the corresponding memory experiment, demonstrating that fault-tolerant quantum computation is possible for barbell codes. It is important to note that we also investigated barbell codes with a stabilizer weight of 10 (see Supplementary Information for details).  While we found that codes with higher stabilizer weight exhibit more attractive code parameters (more precisely, an improvement in code efficiency~$kd^2/n$ up to $8$ compared to  surface code patches), their QEC cycle naturally has a depth surplus of $2$. While this currently degrades performance in circuit-level simulations, further improvements in decoding, for example, may increase the relative attractiveness of the weight-10 barbell codes in the future.

While the connectivity requirements of the six-qubit star lattice are somewhat more demanding than those of the square lattice, they are outweighed by the significant benefits the barbell code brings for actual large-scale quantum hardware. First of all, as we mentioned before, the barbell codes we investigated achieve a physical qubit-per-logical-qubit overhead of up to 8 times lower than a surface code of the same code distance. Thanks to the \nlcs{} being parallel, they can all be routed within a single hardware layer without air-bridges. This is reflected in a hardware complexity metric of only $C_{\mathrm{hw}} \approx 1.65$, calculated in the Supplementary Information following the framework of Ref.~\cite{mathews2026placing}, which is below that of other known tile codes for comparable encoding rates. 
Moreover, the length of all \nlcs{} is fixed and identical. Scaling up the distance of the code or performing lattice surgery between barbell codes requires the exact same chip layout and coupler length as a small memory experiment. 
Also, the \DHex{} only needs to connect three or four couplers per physical qubit (against six for weight-6 BB codes), which relaxes design constraints.

The closest comparison to barbell codes found in the literature is the bivariate bicycle (BB) codes~\cite{linQuantumTwoblockGroup2023,bravyi2023highthreshold,eberhardt2024pruningqldpccodesbivariate,eberhardt2024logicaloperatorsfoldtransversalgates,chen2025anyontheorytopologicalfrustration, liang2025generalizedtoriccodestwisted, liang2024operatoralgebraalgorithmicconstruction}. BB codes require multiple layers to route all the \nlcs{}, which can complicate the chip fabrication process. For instance, according to Ref.~\cite{mathews2026placing},  the hardware complexity metric of $[[144,12,12]]$ BB code~\cite{bravyi2023highthreshold} exceeds 3. The length of the longest coupler scales with the code's distance since these codes have periodic boundaries. This forces one to construct the QPU hardware tailored to a specific BB code and distance. Moreover, modern techniques for implementing logical operations also require different connectivity than the QEC cycle of the codes themselves~\cite{he2025extractorsqldpcarchitecturesefficient,swaroop2025universaladaptersquantumldpc,cross2025improvedqldpcsurgerylogical,yoder2025tourgrossmodularquantum}, making the corresponding hardware difficult to engineer. 

In summary, we presented the first gap-free protocol for implementing qLDPC codes, specifically barbell codes, on superconducting quantum hardware. The combination of multi-qubit couplers, \nlcs{}, and superdense syndrome extraction enables the construction of high-weight and non-local stabilizers while keeping hardware complexity low. We have simulated the performance of these codes under circuit-level noise, both for memory and logical Pauli measurements. Our codes show comparable performance under circuit-level noise to the surface code and achieve up to an 8-fold improvement in code-parameter efficiency. Our work demonstrates that qLDPC codes are feasible on superconducting hardware and reports significant overhead savings for both mid-sized and large-scale quantum processors.

This work gives rise to multiple open questions: (1) We expect that by using a custom-made decoder for barbell codes, the performance and run-time can be improved further. We leave the task of developing such a decoder to future work. (2) In \cite{yang2025planarfaulttolerantquantumcomputation}, the universal fault-tolerant computation for tile codes was demonstrated by coupling tile codes to surface codes. It remains an open question whether this is viable -- to be verified by circuit-level noise simulation. (3) As an alternative to (2), one may ask which kind of gates, such as $H$, $S$, and $T$ gates, one can execute natively on tile codes, or, more specifically, barbell codes. Any future improvements related to these points could yield even greater overhead savings.

\bibliographystyle{alpha}
\bibliography{barbell}

\newpage
\section*{Methods}
\subsection*{Tile codes}
We start by explaining the construction of tile codes~\cite{steffan2025tile, breuckmann2025logicaloperatorsderivedautomorphisms, liang2025planarldpccodes}; for a visualization of the process, see \Cref{fig:tile-codes}. For a tile code, the data qubits sit on the edges of a regular 2D grid. We say a data qubit is \emph{horizontal} or \emph{vertical} if it is located at a horizontal or a vertical edge. Choose a pair of $X$- and $Z$-type stabilizer tiles confined in boxes of size $(D+1) \times (D+1)$.
See \Cref{fig:tile-codes} (a) for a concrete example of stabilizer tiles where $D = 3$. Each vertex and plaquette in the square grid may host one $X$- and $Z$-type stabilizer tile, respectively.

\begin{figure}
    \centering
    \includegraphics[width=\linewidth]{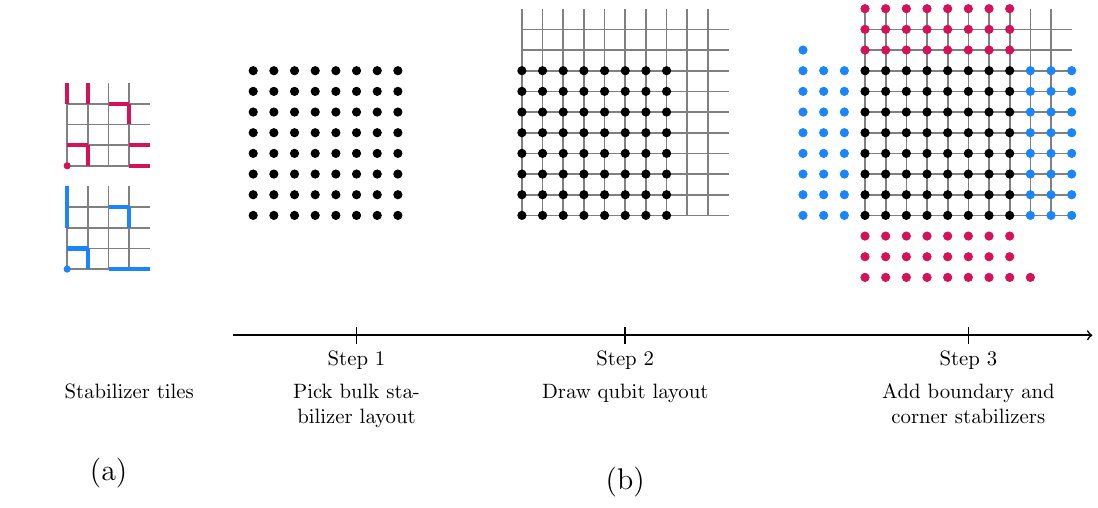}
    \caption{Construction of a tile code. In (a), we show a specific choice of $X$- and $Z$-type tiles confined in boxes of size $(D+1)\times (D+1)$, here with $D=3$. In (b), we depict the process of constructing a tile code: One starts by drawing a lattice of bulk stabilizers of size $(L-D) \times (M-D)$. Here, we set $L = M = 11$. The set of qubits is the union of all boxes at these bulk stabilizers. Then, one chooses $D$ rows resp.\ columns of $X$- resp.\ $Z$-boundary stabilizers as well as corner stabilizers if possible. Later, we will depict tile codes by shifting the $Z$-tile by half-integer values. To avoid confusion, we will depict them with squares instead of circles and put circles on the edges marking data qubits.}
    \label{fig:tile-codes}
\end{figure}

The construction of a tile code proceeds in four steps. We visualize the process in \Cref{fig:tile-codes} (b).
In \textbf{Step 1}, we place bulk stabilizers in the rectangular region of the size $(L-D) \times (M-D)$. In \textbf{Step 2}, we obtain a bigger rectangular region of size $L \times M$ in the top and right directions by $D$. Each data qubit sits on the edges of this rectangular region.
Note that by construction, bulk stabilizers of the same type have the same support up to translation. In \textbf{Step 3}, we add boundary stabilizers.
The support of a boundary stabilizer at a vertex or a plaquette is defined as the support of the stabilizer tile translated to the vertex or the plaquette restricted to the $L \times M$ region.
We add $D$ layers of $X$-type boundary stabilizers on the top and the bottom of the bulk stabilizers. We also add $D$ layers of $Z$-type boundary stabilizers on the left and the right sides of the bulk stabilizers.
In addition, we add boundary stabilizers at the four corners if they commute with already existing stabilizers. In \textbf{Step 4}, we remove qubits that are not supported by either $X$-type or $Z$-type stabilizers. After removing qubits, we also remove a stabilizer if its support is empty.
Note that qubits are not removed in all examples of this work.

\subsection*{Lattice surgery for tile codes}

We will now briefly describe the protocol from~\cite{yang2025planarfaulttolerantquantumcomputation} performing targeted fault-tolerant logical measurements on tile codes. We will explain the construction for measurements of logical $Z$-type Pauli operators in a single patch; the measurement of $X$-type operators is analogous. 

It has been shown that any logical operator of a tile code is supported near the boundary and is always a product of ``omitted stabilizers''; see \Cref{fig:logicalmeasurement} for an example of one such logical operator. Take a logical $Z$-type operator $L_Z$, as for example depicted in \Cref{fig:logicalmeasurement}, and fix some extension length $E$. Then one can construct an extended version of the tile code in which $L_Z$ is the product of the additional $Z$-type stabilizers. 
We add additional ancilla qubits on the supports of the additional $Z$-type stabilizers. We also add additional $X$-type stabilizers on the extended region if they commute with all $Z$-type stabilizers, including the new ones. We mention that a similar protocol is available for measuring products of Pauli operators distributed between two patches of tile codes.

\begin{figure}
    \centering
    \includegraphics[width=0.8\linewidth]{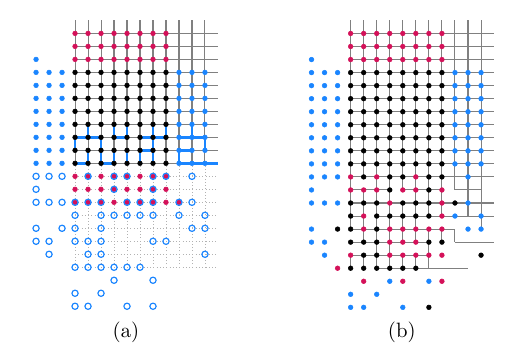}
    \caption{A tile code and its extension for measuring the logical Pauli $Z$-operator corresponding to the logical operator depicted in (a). In an extended code, this logical operator is the product of omitted Pauli $Z$ stabilizers highlighted in (a) as open circles. Extending the code and including ancilla qubits on the edges and stabilizers as shown in (b) lets one measure this logical operator fault-tolerantly.}
    \label{fig:logicalmeasurement}
\end{figure}

\subsection*{Barbell codes on the \DHex{} plus \nlc{} architecture}
Here, we construct a specific example of weight-8 barbell codes compatible with the Barbell architecture.  
The construction is based on the tile code specified by the rectangular region of size $L \times M$ and the stabilizer shown in \Cref{fig:tile-codes}. 
The formal definition of barbell codes and other examples of barbell codes are provided in the Supplementary Information. For a visualization of the process, see \Cref{fig:main-figure}.

In the first step, we add $X$-type ($Z$-type) ``dummy stabilizers'' with empty supports at all vertices (plaquettes) where corresponding plaquettes (vertices) are not occupied with $Z$-type ($X$-type) boundary stabilizers.

In the second step, we place ancilla qubits at the vertices and on the plaquettes occupied by stabilizers, including dummies. 
Ancilla qubits on the vertices (plaquettes) of the coordinate are called the $X$-check ($Z$-check) qubits.

In the third step, we may translate each type of qubit by a certain amount, if necessary. 
Here, we translate all $Z$-check qubits, horizontal data qubits, and vertical data qubits by $(-2,2)$, $(-3, 0)$, and $(-2, -1)$, respectively. 
Note that this translation is also applied to the stabilizer supports. We visualize the translated tile code in \Cref{fig:main-figure} (b).

As a final step, we place qubits on top of the Barbell architecture. 
The architecture consists of two types of hexagons, named \emph{type A} and \emph{type B}, together with \nlcs{}. 
Precisely, type A and type B hexagons are defined by their vertices
\begin{align*}
(a,b) &+ \{(0,0), (0.5, 0.5), (1, 1), (1, 1.5), (0.5, 1), (0, 0.5)\} \quad \text{and} \\
(a,b) &+ \{(0, 0), (- 0.5, -0.5), (-0.5, -1), (0, -0.5), (0.5, 0), (0.5, 0.5)\}
\end{align*}
for $(a, b) \in \mathbb{Z}^2$, respectively, where 
$(a,b) + U := \{(a+x, b+y) \mid (x,y) \in U\}$ for $U \subseteq \mathbb{Z}^2$. 
Then, one can tile a 2D plane with these hexagons. Here, each hexagon represents a central element (a qubit or a resonator) attached to the qubits placed on its vertices via local couplers. 
Note that the types of resonators will be useful for describing the syndrome-extraction circuit for the barbell code. This hexagonal tiling defines the six-qubit star lattice. 
We finish the construction of the architecture by placing additional \nlcs{} connecting each $X$-check qubit at $(a, b)$ to the $Z$-check qubit at $(a - 1.5, b + 2.5)$.
For each pair of $X$-check qubit at $(a,b)$ and $Z$-check qubit at $(a-1.5, b+2.5)$, consider the six hexagons containing the two vertices where the two check qubits are located. Note that these hexagons with the \nlcs{} connecting the check qubits form the shape of a ``barbell''. 
We observe that $X$-type and $Z$-type stabilizers of the weight-8 barbell code corresponding to the qubits at $(a,b)$ and $(a-1.5, b+2.5)$ are supported by the qubits on the six hexagons of the barbell.

\begin{figure}[t]
    \centering
    \includegraphics[width=\linewidth]{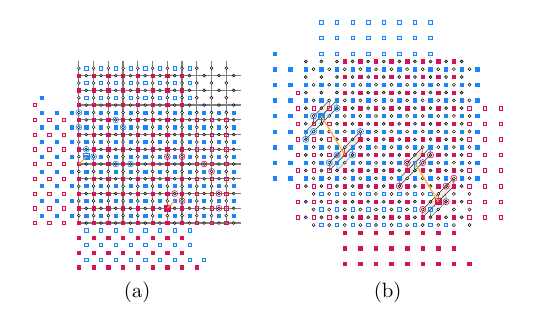}
    \caption{In (a), we show a tile code with an  $X$- and a $Z$-type stabilizer highlighted. For all boundary stabilizers, we added dummy stabilizer qubits. In (b), we show a version of this tile code in which qubits, as well as $X$- and $Z$-type checks, are moved in space. In this way, the support of both the $X$- and the $Z$-type stabilizers is contained in the `barbell' formed by the neighbors of a pair of $X$- and $Z$-checks in the \DHex{} layout. We mention that this is the $[[242,16,8]]$ code for which we run circuit-level noise simulations in the Main Text.}
        \label{fig:main-figure}
\end{figure}

\subsection*{Superdense syndrome extraction circuits of barbell codes}
To measure stabilizers of barbell codes, we adopt the superdense syndrome extraction approach, originally developed to implement color codes on a square grid~\cite{gidney2023newcircuitsopensource}. 
In this section, we provide a full description of the syndrome-extraction cycle for our weight-8 barbell codes. 
We state that the syndrome extraction cycle measures stabilizers of the barbell code; see Supplementary Information for the proof. 

For description, we introduce the following notation. 
By construction, a barbell code has ``barbells'' as building blocks. 
Let $\mathcal{B}$ be the index set of barbells. 
For each barbell $i \in \mathcal{B}$, there are two ancilla qubits, the $X$-check qubit~$a_i^X$ and the $Z$-check qubit $a_i^Z$, corresponding to the $X$- and $Z$-type stabilizers $S_i^X$ and $S_i^Z$.
We saw that both $S_i^X$ and $S_i^Z$ have supports that are subsets of the data qubits in the 6 hexagons adjacent to the two ancilla qubits. 
We label the data qubits in the support of $S_i^X$ and $S_i^Z$ as 
\begin{align}
    \supp(S_i^X) = \{q_{i,1}^X, \dots, q_{i,8}^X\} \qquad \text{and} \qquad \supp(S_i^Z) = \{q_{i,1}^Z, \dots, q_{i,8}^Z\} \,.
\end{align}
Here, some of the data qubits are ``dummy qubits'' for the sake of simplicity since some stabilizers have weights smaller than 8 on the boundary. Later in the description of the syndrome extraction cycle, if there is a gate on a dummy data qubit, one must ignore it.

\begin{table}
    \centering
    \begin{tabular}{c|c||c|c}
        Layer & Circuit & Layer & Circuit \\ \hline \hline
        $-1$ & 
        \makecell[l]{
        \textbf{for} $i \in \mathcal{B}$  \textbf{do} \\ 
        \quad $\sRX \enspace a_i^X$ \\ 
        \quad $\sR \enspace a_i^Z$ \\ 
        \textbf{end for} 
        }  
        & $r \in \{5, 6, 7, 8\}$ & 
        \makecell[l]{
        \textbf{for} $i \in \mathcal{B}$  \textbf{do} \\ 
        \quad $\sCNOT \enspace q_{i,r-4}^Z \enspace a_i^X$  \\
        \quad $\sCNOT \enspace q_{i,r-4}^Z \enspace a_i^Z$  \\
        \textbf{end for} 
        } 
        \\ \hline
        $0$ &  
        \makecell[l]{
        \textbf{for} $i \in \mathcal{B}$  \textbf{do} \\ 
        \quad $\sCNOT \enspace a_i^X \enspace a_i^Z$ \\
        \textbf{end for} 
        }
        & $9$ & 
        \makecell[l]{
        \textbf{for} $i \in \mathcal{B}$  \textbf{do} \\ 
        \quad $\sCNOT \enspace a_i^X \enspace a_i^Z$ \\
        \textbf{end for} 
        }
        \\ \hline
        $r \in \{1, 2, 3, 4\}$ &  
        \makecell[l]{
        \textbf{for} $i \in \mathcal{B}$  \textbf{do} \\ 
        \quad $\sCNOT \enspace a_i^X \enspace q_{i,r}^X$ \\
        \quad $\sCNOT \enspace a_i^Z \enspace q_{i,r}^X$ \\
        \textbf{end for} 
        } 
        & $10$ &
        \makecell[l]{
        \textbf{for} $i \in \mathcal{B}$  \textbf{do} \\ 
        \quad $\sMX \enspace a_i^X$ \\ 
        \quad $\sM \enspace a_i^Z$ \\ 
        \textbf{end for} 
        }  
        \\
    \end{tabular}
    \caption{The syndrome extraction cycle of the weight-8 barbell code. Gates $\sR$ and $\sRX$ reset qubits as $\ket{0}$ and $\ket{+}$. Gates $\sM $ and $\sMX$ measure qubits in $Z$ and $X$ basis.}
    \label{tab:barbell qec cycle}
\end{table}

The syndrome extraction cycle is described in \Cref{tab:barbell qec cycle} and illustrated in \Cref{fig:chip_and_superdense} of the main text. 
In the first two layers, we prepare Bell states on the ancilla qubits $a_i^X$ and $a_i^Z$ for each barbell $i \in \mathcal{B}$ using the \nlc{} connecting $a_i^X$ and $a_i^Z$. 
In the last two layers, we perform Bell measurements on the ancilla qubits using \nlcs{}. 
For each layer $r \in \{1, 2, \dots, w\}$, we apply $\sCNOT$ on the ancilla and data qubits via the couplers in the hexagons. 
We observe two facts from \Cref{fig:chip_and_superdense}. First, for each barbell, if $\sCNOT$s on the $X$-check qubit act on the horizontal (vertical) qubit in a certain round, then $\sCNOT$s on the $Z$-check qubit act on the vertical (horizontal) qubit in that layer, and vice versa. 
Second, for each $ \ sCNOT$, if $\sCNOT$ on the $X$-check qubit is mediated by a type-A (type-B) hexagon in a certain round, then $\sCNOT$ on the $Z$-check is mediated by a type-B (type-A) hexagon in that round, and vice versa. 
From these two facts, we see that the syndrome extraction cycle has circuit depth 12.

Note that there is a freedom of choice for the ordering $\{q_{i,j}^X\}_{j=1}^8$ and $\{q_{i,j}^Z\}_{j=1}^8$. 
The ordering we present was the most performant among 100 randomly chosen orderings in numerical memory experiments.

\subsection*{Numerical simulation}
For all numerical simulations, we consider the uniform depolarizing noise model on the memory and logical Pauli measurement circuits. 
For physical noise strength $p$, we assume that resets and measurements are subject to bit-flip noise of strength $p$, and idling qubits and $\mathsf{CNOT}$s suffer single- and two-qubit depolarizing noise of strength $p$. 
We used Stim~\cite{gidney2021stim} to construct detector error models (DEMs) of our noisy circuits. 
We then used Relay-BP as the decoding algorithm on the DEMs for the Monte Carlo simulation; for details about Relay-BP, see~\cite{mullerImprovedBeliefPropagation2025}. More precisely, we use Relay-BP-5 with a number of legs $R = 300$. As done in~\cite{mullerImprovedBeliefPropagation2025}, we use up to 80 pre-iterations and up to 60 BP-iterations for each leg. 
We decode $X$- and $Z$-type errors separately, disregarding irrelevant detectors. 
For distances 11 and 14, we optimized gamma values separately for lower and higher error rates. We extrapolate the logical error curves by fitting to 
\begin{align}
    \mathrm{LER}(p) = p^{d/2} \exp(c_0 + c_1 p + c_2 p^2)
\end{align}
where $d$ is the distance of the code and $p$ is the physical error rate.

\subsection*{Acknowledgements}
We thank Nikolas P. Breuckmann, Jens Niklas Eberhardt, Jakub Mro\.zek, Frank Deppe, Caspar Ockeloen-Korppi, and Andrew Guthrie for fruitful discussions.

\section*{Supplementary Information}

\section{Six-qubit star lattice plus \nlc{} architecture}
\label{app:6-qubit-star-lattice+nlc}

The six-qubit star lattice plus \nlc{} architecture (6QSL+NLC) inherits most of the properties of the six-qubit star lattice introduced in~\cite{renger2025superconductingqubitresonatorquantumprocessor, vigneau2025quantumerrordetectionstararchitecture,orell2026efficientandaccuratetwoqubitgate}. The unit cell comprises six qubits connected by a multi-mode coupler made of tunable couplers and a central element (\Cref{fig:six-qubit-star}). This topology requires 0.5 central elements and 3 tunable couplers per physical qubit in the bulk. The central elements can be made of qubits or resonators. The cells are organized in a honeycomb lattice forming a \textit{bipartite rhombille} graph~\cite{smithTumblingBlocks2002,conwaySymmetriesThings2016} where qubits and central elements compose the two sets of vertices.

Qubits interact with the central elements via tunable couplers that can be operated to selectively turn their coupling on or off~\cite {mitTunablecoupler2018}.
There are two known protocols to perform controlled-$Z$ (\cz{}) gates between two qubits $q_\mathrm{1}$, $q_\mathrm{2}$ that share a connection to a common central element $q_\mathrm{c}$. 
The first protocol uses \move{}$_{1\mathrm{c}}$ (from $q_1$ to $q_\mathrm{c}$) and \move{}$_{\mathrm{c}1}$ (from $q_\mathrm{c}$ to $q_1$) operations and $\cz{}_{2c}$, a $\cz{}$ gate between $q_2$ and $q_\mathrm{c}$. \move{} operations are similar to an $\mathrm{iSWAP}$ gate, but with the second elements starting in the ground state~\cite{MartinisMOVE, renger2025superconductingqubitresonatorquantumprocessor}.
First \move{}$_{1\mathrm{c}}$ is performed to transfer the state of $q_\mathrm{1}$ into $q_\mathrm{c}$, followed by $\cz{}_{2\mathrm{c}}$ gate between $q_\mathrm{2}$ and $q_\mathrm{c}$, and finally \move{}$_{\mathrm{c}1}$ to transfer back the state from $q_\mathrm{c}$ to $q_\mathrm{1}$. This sequence effectively results in $\cz{} = \move{}_{1\mathrm{c}}-\cz_{2\mathrm{c}}-\move{}_{\mathrm{c}1}$.
The second protocol is a direct \cz{} between $q_\mathrm{1}$ and $q_\mathrm{2}$ using multi-mode interaction carried by $q_\mathrm{c}$ and the two tunable couplers in between~\cite{orell2026efficientandaccuratetwoqubitgate}. Theoretically, this approach allows performing a direct \cz{} gate in $1/\sqrt{2}$ of the time required for a $\move{}_{1\mathrm{c}}-\cz_{2\mathrm{c}}-\move{}_{\mathrm{c}1}$ sequence.
CZ gates with fidelity exceeding 99.4~\% under a total duration of 56~ns have been demonstrated experimentally using both protocols~\cite{Verjauw_APS_talk_2026}.

 \begin{figure}
    \centering
    \includegraphics[width=0.6\linewidth]{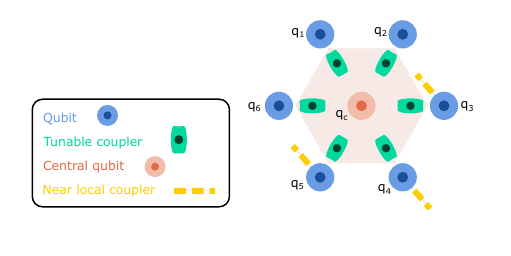}
    \caption{Unit cell of \DHex{} plus \nlc{} (6QSL+NLC) architecture. Six qubits are connected via tunable couplers to a central element in a star topology. The pink hexagon highlights the multi-qubit coupler formed by the six tunable couplers and the central element. The extremities of \nlcs{} are drawn for three qubits (the syndrome qubits of the barbell code).}
    \label{fig:six-qubit-star}
\end{figure}

The barbell codes require, additionally, \nlcs{} to connect pairs of syndrome qubits. Every ancilla qubit (roughly half of the physical qubits) shares one \nlcs{} with another syndrome qubit, making the \nlcs{} count 1/4 per physical qubit. Each physical qubit needs three connections to a tunable coupler and, if it is a syndrome qubit, one connection to a \nlcs{}. Central elements require six connections to tunable couplers but can be operated without control lines or a readout resonator.

\section{Hardware complexity}
\label{app:hardware-complexity}

In this section, we explain a hardware complexity metric~$C_{\mathrm{hw}}$ for superconducting QPU chips, as introduced in~\cite{mathews2026placing}, and compute it for the chip architecture supporting the barbell code. We base our calculation on the weight-8 barbell code. The metric is unchanged for weight-10, since the relevant parameters are identical; however, we expect it to be slightly smaller for weight-6 due to the shorter \nlcs{}.

The metric $C_{\mathrm{hw}}$ takes into account four quantities: the number of tiers ($\mathsf{Tier})$, the average length of the couplers across higher tiers in units of the length of the shortest coupler ($\mathsf{Length})$, the maximum average of bump bonds across all tiers ($\mathsf{Bump}$), and the average number of through-silicon vias per edge on higher tiers ($\mathsf{TSV}$). 
Each quantity $q_i$ for $i \in \{ \mathsf{Tier}, \mathsf{Length}, \mathsf{Bump}, \mathsf{TSV} \}$ is linearly scaled between a \emph{baseline} value $b_i$, reflecting the state-of-the-art hardware implementing surface codes, and an \emph{optimistic} value $p_i$ reflecting the near-future hardware that is believed by the authors of~\cite{mathews2026placing} to be attainable.

The scaled quantity~$c_i$ of $q_i$ is
\begin{equation} \label{eq: ci def}
    c_i=\frac{q_i-b_i}{p_i-b_i} \,,
\end{equation}
and the hardware complexity~$C_{\mathrm{hw}}$ is given by 1 plus the arithmetic mean of the four scaled hardware quantities~$c_i$ with the uniform weight distribution $\{w_i = 1/4\}_i$:
\begin{equation} \label{eq: Chw def}
    C_{\mathrm{hw}}=1+\frac{\sum_i w_i c_i}{\sum_i w_i} \,. 
\end{equation}

By definition, the surface code has the baseline hardware complexity $C_{\mathrm{hw}} = 1$ by plugging $q_i = b_i$. The optimistic hardware complexity is $C_{\mathrm{hw}} = 2$.

Here we provide the four quantities $q_i$ of the chip architecture that supports the barbell code, see \Cref{tab:Chw}.
\begin{itemize}
\item  Since all \nlcs{} are parallel, they can be routed in a single additional tier on top of the base tier. Therefore, the architecture has 2 tiers.
\item  Since there is no crossing of \nlcs{}, the number of bump bonds is 2 for each \nlc{}: one at each end of the non-local coupler.
\item For the same reason, the number of TSVs for each \nlc{} is also 2.
\item  The length of the \nlc{} can be calculated using elementary geometry in~\Cref{fig:chip_and_superdense} as
\begin{equation}
    \ell=\sqrt{( 3.5a )^2 + (13a\sin{(\pi/3)} )^2} \approx 11.8a \,,
\end{equation}
where $a$ is the unit length of the coupler in the base tier. 
Here, we make the simplification that the qubit arrangement perfectly follows a regular hexagonal lattice, but real devices could have variable coupler lengths and qubit positions to accommodate design constraints.

\end{itemize}
From these quantities, we derive $C_{\mathrm{hw}} \approx 1.65$ for the barbell code, mainly dominated by the \nlc{} length. This number is particularly modest in comparison with the other known codes with comparable logical efficiency~\cite{mathews2026placing}, thanks to the small number of tiers, bumps, and TSVs made possible by the fact that all \nlcs{} are parallel. A representation of the distance $8$ barbell code on a physically realistic hexagonal lattice with \nlcs{} is shown in~\Cref{fig:Barbell_DHex}.

Note that we follow the convention of \cite{mathews2026placing} that use one entire tier for the qubit layer, placing the qubit on one face and routing the control lines on the opposite face. 
Therefore, a second tier is required for the \nlcs{}. 
An other option would be to have the qubits and control lines in the same face leaving the opposite face for the \nlcs{}. In this case the all the components would fit in a one single tier and there would not be any TSV. This option would results in an even lower $C_{\mathrm{hw}}$.

We also point out that not all the difficulties of manufacturing QPU for certain codes are captured by the metric. For instance, the number of connections that a qubit needs to support is lower using the \DHex{} architecture, as discussed in Supp.~\Cref{app:6-qubit-star-lattice+nlc}, which is another advantage. On the other hand, parallel \nlcs{} might create crosstalk issues; however, we do not expect this to pose a major obstacle to chip development.

\begin{figure}
    \centering
    \includegraphics[width=\linewidth]{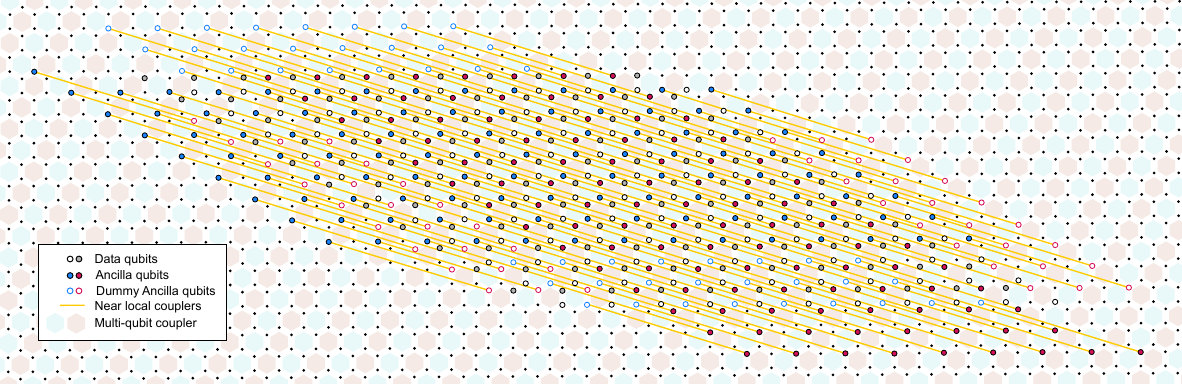}
    \caption{Illustration of the weight-8 barbell code of distance 8 (Fig.~\ref{fig:main-figure}) in \DHex{} plus \nlc{} architecture (6QSL+NLC). For simplicity, the central elements and tunable coupler are only represented in the form of multi-qubit couplers.
   }
    \label{fig:Barbell_DHex}
\end{figure}

\begin{table}[]
    \centering
    \begin{tabular}{|c|c|c|c|c|c|}
        \hline
        Code (quantities)    &  $\mathsf{Tier}$   & $\mathsf{Length}$    & $\mathsf{Bump}$     & $\mathsf{TSV}$   &     $ C_{\mathrm{hw}}$ \\
        \hline \hline
        Baseline (surface), $b_i$ & 1 & 1 & 0 & 0 & 1 \\
        \hline
        Optimistic, $p_i$ & 5 & 10 & 4 & 3 & 2 \\
        \hline
        Barbell, $q_i$ & 2        &  11.8      &   2	   & 2      &  1.65    \\
        \hline
    \end{tabular}
    \caption{Quantities for calculating the hardware complexity~$C_{\mathrm{hw}}$ based on the number of Tiers, the length of the \nlcs{}, the number of Bumps, and of TSVs. For the first and the second row, $C_{\mathrm{hw}}$ is calculated from \cref{eq: ci def,eq: Chw def} by setting $q_i = b_i$ and $q_i = p_i$, respectively.}
    \label{tab:Chw}
\end{table}

\section{Barbell codes -- definition and examples}
In this section, we start with a brief overview of tile codes. Then we describe a construction of a barbell code from a given tile code associated with a connectivity graph imposed by hardware constraints. We also provide specific examples of barbell codes, including the code with weight-8 bulk stabilizers in the main text, as well as codes with weights 6 and 10.

\subsection{Tile codes}
Barbell codes are constructed from tile codes by adding ancilla qubits, translating qubits, and embedding qubits in a proper connectivity graph. 
For a precise definition of a barbell code, we first briefly describe the construction of a tile code. For details, see \cite{steffan2025tile,liang2025planarldpccodes,breuckmann2025logicaloperatorsderivedautomorphisms}. 
As a first step, we choose the shape of the bulk stabilizers so that the minimal rectangular lattice containing them has size $(D+1) \times (D+1)$. 
Data qubits of a tile code sit on horizontal or vertical edges of a lattice. 
In this work, we choose a rectangular lattice of size $L \times M$ as the set of edges where data qubits are located (see \cite{steffan2025tile} for different shapes of lattices, e.g., rotated rectangles)
Explicitly, we define 
\begin{align}
    U_{\mathtt{h}} &= \{(u_1, u_2) \in \mathbb{Z}^2 \mid 0 \leq u_1 \leq L-1, 0 \leq u_2 \leq M - 1 \} \,, \\
    U_{\mathtt{v}} &= \{(u_1, u_2) \in \mathbb{Z}^2 \mid 0 \leq u_1 \leq L-1, 0 \leq u_2 \leq M - 1 \} \,.
\end{align}
Denote $q(\texttt{h}, \mathbf{u})$ and $q(\mathtt{v}, \mathbf{u})$ the data qubits on the horizontal and vertical edges $\{(u_1, u_2), (u_1 + 1, u_2)\}$ and $\{(u_1,u_2), (u_1, u_2+1) \}$, respectively, with $\mathbf{u} = (u_1, u_2)$. 
The sets of the horizontal and the vertical qubits are 
\begin{align}
    \{q(\texttt{h}, \mathbf{u}) \mid \mathbf{u} \in U_\texttt{h} \} \quad \text{and} \quad \{q(\texttt{v}, \mathbf{u}) \mid \mathbf{u} \in U_\texttt{v} \} \,, \quad U_\texttt{h}, U_\texttt{v} \subset \mathbb{Z}^2 \,. 
\end{align}
We put $X$ and $Z$ stabilizers of the tile code at the vertices in the sets~$U_{\mathtt{X}, \mathrm{tile}}$ and $U_{\mathtt{Z}, \mathrm{tile}}$. 
These sets have disjoint subsets $U_{\mathtt{X}, \mathrm{tile,base}}$, $U_{\mathtt{X}, \mathrm{tile,corner}}$ and $U_{\mathtt{Z}, \mathrm{tile,base}}$, $U_{\mathtt{Z}, \mathrm{tile,corner}}$ which are
\begin{align}
    U_{\mathtt{X}, \mathrm{tile,base}} &= \{(u_1, u_2) \in \mathbb{Z}^2 \mid 0 \leq u_1 \leq L-D-1, -D \leq u_2 \leq M-1 \} \,, \\
    U_{\mathtt{Z}, \mathrm{tile,base}} &= \{(u_1, u_2) \in \mathbb{Z}^2 \mid -D \leq u_1 \leq L-1, 0 \leq u_2 \leq M-D-1 \} \,, \\
    U_{\mathtt{X}, \mathrm{tile,corner}} &= U_{\mathtt{X}, \mathrm{tile}} \setminus U_{\mathtt{X}, \mathrm{tile, base}} \,, \\
    U_{\mathtt{Z}, \mathrm{tile,corner}} &= U_{\mathtt{Z}, \mathrm{tile}} \setminus U_{\mathtt{Z}, \mathrm{tile, base}}  \,.
\end{align}
The stabilizers that are not located in these subsets are called the \emph{corner stabilizers}. 
In this work, we can specify all tile codes by the bulk stabilizer shape (including $D$), $L$, $M$, $U_{\mathtt{X}, \mathrm{tile,corner}}$ and $U_{\mathtt{Z}, \mathrm{tile,corner}}$.

\subsection{Definition of barbell codes}\label{sec: def barbell codes}
In the first step of barbell code construction, we add $X$- and $Z$-type ``dummy'' stabilizers as follows. 
We choose an ``offset'' $\mathbf{v}_{\mathtt{XZ}} \in \mathbb{Z}^2$, and for each $\mathbf{u} \in U_{\mathtt{X},\mathrm{tile}}$, if $\mathbf{u}':=\mathbf{u} + \mathbf{v}_{\mathtt{XZ}} \not \in U_{\mathtt{Z},\mathrm{tile}}$, add a dummy $Z$-type stabilizer at $\mathbf{u}'$ whose support is empty. Analogously, for each $\mathbf{u} \in U_{\mathtt{Z},\mathrm{tile}}$, if $\mathbf{u}'' := \mathbf{u}-\mathbf{v}_{\mathtt{XZ}} \not \in U_{\mathtt{X}, \mathrm{tile}}$, then add a dummy $X$-type stabilizer at $\mathbf{u}''$ whose support is empty. 
Denote~$U_{\mathtt{X}}$ and $U_\mathtt{Z}$ the resulting sets of vertices occupied by $X$- and $Z$-type stabilizers after including dummies. 
Explicitly, we have
\begin{align}
    U_{\mathtt{X}} = U_{\mathtt{X}, \mathrm{tile}} \cup \{\mathbf{u} - \mathbf{v}_{\mathtt{XZ}} \mid \mathbf{u} \in U_{\mathtt{Z}, \mathrm{tile}}, \mathbf{u} - \mathbf{v}_{\mathtt{XZ}} \not \in U_{\mathtt{X}, \mathrm{tile}} \} \,, \\
    U_{\mathtt{Z}} = U_{\mathtt{Z}, \mathrm{tile}} \cup \{\mathbf{u} + \mathbf{v}_{\mathtt{XZ}} \mid \mathbf{u} \in U_{\mathtt{X}, \mathrm{tile}}, \mathbf{u} + \mathbf{v}_{\mathtt{XZ}} \not \in U_{\mathtt{Z}, \mathrm{tile}} \} \,.
\end{align}
By construction, we have a one-to-one correspondence 
\begin{align} \label{eq: one-to-one $X$ and $Z$ check qubits}
    U_{\mathtt{X}} \ni \mathbf{u} \mapsto \mathbf{u}+\mathbf{v}_{\mathtt{XZ}} \in U_{\mathtt{Z}} 
\end{align}
between $U_{\mathtt{X}}$ and $U_{\mathtt{Z}} = U_{\mathtt{X}}+ \mathbf{v}_{\mathtt{XZ}}$, where 
$A + \mathbf{v} := \{\mathbf{a} + \mathbf{v} \mid \mathbf{a} \in A\}$ for $A \subseteq \mathbb{Z}^2, \mathbf{v} \in \mathbb{Z}^2$.

In the second step, we place ancilla qubits on the vertices and on the plaquettes that are occupied by stabilizers, including dummies. 
Precisely, we add an ancilla \emph{$X$-check qubit} $q(\mathtt{X}, \mathbf{u})$ at the coordinate $(u_1, u_2)$ for each $\mathbf{u} = (u_1, u_2) \in U_{\mathtt{X}}$.
Similarly, we add an ancilla \emph{$Z$-check qubit}~$q(\mathtt{Z}, \mathbf{u})$ at the coordinate $(u_1 + 0.5, u_2 + 0.5)$ for each $\mathbf{u} = (u_1, u_2) \in U_{\mathtt{Z}}$. 

The sets of $X$ and $Z$ check qubits are 
\begin{align}
    \{q(\texttt{X}, \mathbf{u}) \mid \mathbf{u} \in U_\texttt{X} \} \quad \text{and} \quad \{q(\texttt{Z}, \mathbf{u}) \mid \mathbf{u} \in U_\texttt{Z} \} \,, \quad U_\texttt{X}, U_\texttt{Z} \subset \mathbb{Z}^2 \,. 
\end{align}

In the final step, we translate qubits $\{q(t, \mathbf{u}) \mid \mathbf{u} \in U_t\}$ for $t \in \{\mathtt{X}, \mathtt{Z}, \mathtt{h}, \mathtt{v}\}$ by $\mathbf{v}_t \in \mathbb{Z}^2$. 
Vectors~$\{\mathbf{v}_t\}_{t \in \{\mathtt{X}, \mathtt{Z}, \mathtt{h}, \mathtt{v}\}}$ are chosen so that the whole construction satisfies certain conditions related to the qubit connectivity, which is to be explained at the end of this subsection. 
In the end, the sets of all four types of qubits can be written as
\begin{align}
    \{q(t, \mathbf{u}) \mid \mathbf{u} \in U_t + \mathbf{v}_t \}\,, \mathbf{v}_t \in \mathbb{Z}^2, t \in \{\texttt{X}, \texttt{Z}, \texttt{h}, \texttt{v}\} \,.
\end{align}

Without loss of generality, we do not translate $X$-check qubits, i.e., $\mathbf{v}_{\texttt{X}} = 0$. 
We consistently translate stabilizers as well. 
That is, for a $t$-type ($t \in \{\mathtt{X}, \mathtt{Z}\}$) stabilizer~$S$ of a tile code at $\mathbf{u}$ with its support 
\begin{align}
\supp(S) = \{q(t_i, \mathbf{u}_i') \mid i = 1, \dots, w\} \,, \quad t_1, \dots, t_w \in \{\mathtt{h}, \mathtt{v}\} \,,
\end{align}
the translated $t$-type stabilizer~$S'$ corresponds to the translated check qubit $q(t, \mathbf{u} + \mathbf{v}_t)$ and has support
\begin{align}
\supp(S') = \{q(t_i, \mathbf{u}_i' + \mathbf{v}_{t_i}) \mid i = 1, \dots, w\} \,.
\end{align}

We will define a barbell code based on a ``connectivity graph''~$G = (V, E)$ which reflects hardware constraints restricting the set of available two-qubit gates. 
All two-qubit gates for measuring stabilizers of a barbell code act on pairs of qubits that are adjacent in its connectivity graph. 
The connectivity graph $G$ has vertices that are the data and syndrome qubits, i.e., 
\begin{align}
    V = \bigcup_{t \in \{\mathtt{X}, \mathtt{Z}, \mathtt{h}, \mathtt{v}\}} \{q(t, \mathbf{u}) \mid \mathbf{u} \in U_{t} + \mathbf{v}_t\} \,,
\end{align}
and its edges are constructed as follows. 
We put an edge for the pair of check qubits~$q(\mathtt{X}, \mathbf{u})$ and~$q(\mathtt{Z}, \mathbf{u}+ \mathbf{v}_{\mathtt{XZ}} + \mathbf{v}_{\mathtt{Z}})$ for each $\mathbf{u} \in U_{\mathtt{X}}$. 
Note that these qubits are paired via the correspondence in \cref{eq: one-to-one $X$ and $Z$ check qubits} before the translation by $\mathbf{v}_{\mathtt{Z}}$. 
This edge can be written as 
\begin{align}
    \{q(\texttt{X}, \mathbf{u}), q(\texttt{Z}, \mathbf{u} + \mathbf{v}_{\mathtt{XZ}} + \mathbf{v}_{\texttt{Z}})\} \,, \mathbf{u} \in U_{\texttt{X}}.
\end{align}
Next, we construct the set~$E_{\mathtt{X},\mathbf{u}}$ of edges between the $X$-check qubit $q(\mathtt{X}, \mathbf{u})$ for each $\mathbf{u} \in U_{\mathtt{X}}$ and data qubits using the ``local neighbors'' $N_{\texttt{X}, \texttt{h}}$, $N_{\texttt{X}, \texttt{v}} \subset \mathbb{Z}^2$ of $q(\mathtt{X}, (0,0))$:

\begin{align}
\begin{aligned}
E_{\mathtt{X}, \mathbf{u}}&= E_{\mathtt{X}, \mathbf{u}, \mathbf{h}} \cup E_{\mathtt{X}, \mathbf{u}, \mathbf{v}} \,, \quad \text{where}\\
E_{\mathtt{X}, \mathbf{u}, \mathtt{h}} &= 
    \{\{q(\texttt{X}, \mathbf{u}), q(\texttt{h}, \mathbf{u} + \mathbf{v}'_{\texttt{h}} )\} \mid \mathbf{v}'_{\texttt{h}} \in N_{\texttt{X,h}} \ \text{ if } \ \mathbf{u} + \mathbf{v}'_{\texttt{h}} \in U_{\texttt{h}} + \mathbf{v}_{\texttt{h}} \} \,, \\
E_{\mathtt{X}, \mathbf{u}, \mathtt{v}} &= 
    \{\{q(\texttt{X}, \mathbf{u}), q(\texttt{v}, \mathbf{u} + \mathbf{v}'_{\texttt{v}} )\} \mid \mathbf{v}'_{\texttt{v}} \in N_{\texttt{X,v}} \ \text{ if } \ \mathbf{u} + \mathbf{v}'_{\texttt{v}} \in U_{\texttt{v}} + \mathbf{v}_{\texttt{v}} \} \,.
\end{aligned}
\end{align}
Note that $N_{\mathtt{X}, \mathtt{h}}$ and $N_{\mathtt{X}, \mathtt{v}}$ are the sets of vertices where data qubits are connected to $q(\mathtt{X}, (0,0))$ as their names suggest.
Analogously, we represent the set~$E_{\mathtt{Z}, \mathbf{u}'}$ of edges connecting the $Z$-check qubit~$q(\mathtt{Z},\mathbf{u}')$ with $\mathbf{u}' := \mathbf{u} + \mathbf{v}_{\mathtt{XZ}} + \mathbf{v}_{\mathtt{Z}} \in U_{\mathtt{Z}} + \mathbf{v}_{\mathtt{Z}}$ to data qubits using the local neighbors $N_{\texttt{Z},\texttt{h}}$, $N_{\texttt{Z},\texttt{v}} \subset \mathbb{Z}^2$ of $q(\mathtt{Z},(0,0))$:
\begin{align}
\begin{aligned} E_{\mathtt{Z}, \mathbf{u}'} &= E_{\mathtt{Z}, \mathbf{u}', \mathtt{h}} \cup E_{\mathtt{Z}, \mathbf{u}', \mathtt{v}}\,, \quad \text{where} \\
E_{\mathtt{Z}, \mathbf{u}',\mathtt{h}} &=
    \{\{q(\texttt{Z}, \mathbf{u}'), q(\texttt{h}, \mathbf{u}' + \mathbf{v}'_{\texttt{h}} )\} \mid \mathbf{v}'_{\texttt{h}} \in N_{\texttt{Z,h}} \ \text{ if } \ \mathbf{u}' + \mathbf{v}'_{\texttt{h}} \in U_{\texttt{h}} + \mathbf{v}_{\texttt{h}} \} \,, \\
E_{\mathtt{Z}, \mathbf{u}', \mathtt{v}} &=    \{\{q(\texttt{Z}, \mathbf{u}'), q(\texttt{v}, \mathbf{u}' + \mathbf{v}'_{\texttt{v}} )\} \mid \mathbf{v}'_{\texttt{v}} \in N_{\texttt{Z,v}} \ \text{ if } \ \mathbf{u}' + \mathbf{v}'_{\texttt{v}} \in U_{\texttt{v}} + \mathbf{v}_{\texttt{v}} \} \, .
\end{aligned}
\end{align} 
The set~$E$ of edges of $G$ is then written as 
\begin{align}
    E = \bigcup_{\mathbf{u} \in U_\mathtt{X}} \{ \{q(\texttt{X}, \mathbf{u}), q(\texttt{Z}, \mathbf{u} + \mathbf{v}_{\mathtt{XZ}} + \mathbf{v}_{\texttt{Z}})\}  \} \cup E_{\mathtt{X, \mathbf{u}}} \cup E_{\mathtt{Z}, \mathbf{u + \mathbf{v}_{\mathtt{XZ}} + \mathbf{v}_\mathtt{Z}}} \,.
\end{align}

We call the resulting code with check qubits a barbell code with respect to the connectivity graph $G$ if the following two conditions hold:
\begin{enumerate}
    \item For each $X$-check qubit $q(\texttt{X}, \mathbf{u})$, $\mathbf{u} \in U_\mathtt{X}$, every qubit in the support of the $X$-type stabilizer corresponding to $q(\texttt{X}, \mathbf{u})$ is either adjacent to $q(\texttt{X}, \mathbf{u})$ or $q(\texttt{Z}, \mathbf{u} + \mathbf{u}_{\mathtt{XZ}} + \mathbf{v}_{\texttt{Z}})$ with respect to $G$.
    \item For each $Z$-check qubit $q(\texttt{Z}, \mathbf{u} + \mathbf{v}_{\mathtt{XZ}} + \mathbf{v}_{\texttt{Z}})$, $\mathbf{u} \in U_\mathtt{X}$, every qubit in the support of the $Z$-type stabilizer corresponding to $q(\mathtt{Z}, \mathbf{u} + \mathbf{v}_{\mathtt{XZ}} + \mathbf{v}_{\mathtt{Z}})$ is either adjacent to $q(\texttt{X}, \mathbf{u})$ or $q(\texttt{Z}, \mathbf{u} + \mathbf{v}_{\texttt{Z}})$ with respect to $G$.
\end{enumerate}
By construction, any barbell code with the associated connectivity graph~$G$ is specified by the offset vector~$\mathbf{v}_{\mathtt{XZ}}$, the translation vectors~$\mathbf{v}_{\mathtt{Z}}, \mathbf{v}_{\mathtt{h}}, \mathbf{v}_{\mathtt{v}}$, and the local neighbors $N_{\mathtt{X}, \mathtt{h}}, N_{\mathtt{X}, \mathtt{v}}, N_{\mathtt{Z}, \mathtt{h}}, N_{\mathtt{Z}, \mathtt{v}}$ together with the original tile code.

We observe that the connectivity graph~$G$ is built from the subgraphs, the ``barbells''. 
We formally define a barbell $G_{\mathbf{u}} = (V_{\mathbf{u}}, E_{\mathbf{u}})$ for each $\mathbf{u} \in U_{\mathtt{X}}$ as a subgraph whose sets of vertices and edges are
\begin{align}
    V_{\mathbf{u}} &= \{ q(\mathtt{X}, \mathbf{u}), q(\mathtt{Z}, \mathbf{u} + \mathbf{v}_{\mathtt{XZ}} + \mathbf{v}_{\mathtt{Z}}) \} \cup \mathcal{N}(q(\mathtt{X}, \mathbf{u})) \cup \mathcal{N}(q(\mathtt{Z}, \mathbf{u} + \mathbf{v}_{\mathtt{XZ}} + \mathbf{v}_{\mathtt{Z}}) \,, \label{eq: barbell V} \\
    E_{\mathbf{u}} &= E_{\mathtt{X}, \mathbf{u}} \cup E_{\mathtt{Z}, \mathbf{u}} \,, \label{eq: barbell E}
\end{align}
where
\begin{align}
    \mathcal{N}(q) &= \{q' \in V \mid \text{$q'$ is a data qubit adjacent to $q$ in $G$}\} \quad \text{for each ancilla qubit $q$.}
\end{align}

\subsection{Examples of barbell codes}
In this work, we focus on barbell codes with connectivity graphs that can be realized by the six-qubit star lattice plus \nlc{} (6QSL+NLC) architecture. 
In this architecture, \nlcs{} connect pairs of $X$-check qubits and $Z$-check qubits that are displaced by $\mathbf{v}_{\mathtt{XZ}} + \mathbf{v}_\texttt{Z} + (0.5, 0.5)$. 
The six-qubit star lattice is a tiling of a 2D plane with hexagons where qubits sit on the vertices. It allows ancilla qubits to have local neighbors 
\begin{align}
    N_{\texttt{X}, \texttt{h}} &= \{(-1, -1), (-1, 0), (0,0), (0,1) \}\,, \\ N_{\texttt{X}, \texttt{v}} &= \{(-1, -1), (0, -1), (0,0), (1, 1)\}\,, \\
    N_{\texttt{Z}, \texttt{h}} &= \{(-1, -1), (0, 0), (0, 1), (1, 1) \}\,, \\
    N_{\texttt{Z}, \texttt{v}} &= \{(0, -1), (0, 0), (1,0), (1, 1)\}\,.
\end{align} 
We listed several families of barbell codes on the 6QSL+NLC architecture in \Cref{tab:barbellcodetiles}.

\begin{table}[]
\centering
\renewcommand{\arraystretch}{1.2}

\begin{tabular}{
    C{2.2cm}|
    C{2.7cm}|
    C{6.8cm}|
    C{3.5cm}
}

Tile & Barbell &
$D$, $U_{\mathtt{X}, \mathrm{tile,corner}}$,
$U_{\mathtt{Z},\mathrm{tile,corner}}$,
$\mathbf{v}_{t}$
& $[[n,k,d]]$
\\

\hline\hline

\includegraphics[width=1.5cm]{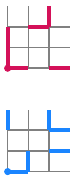}
&
\includegraphics[width=2cm]{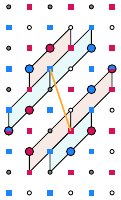}
&
\makecell{
$D = 2$, \\
$U_{\mathtt{X}, \mathrm{tile,corner}} = \emptyset$, \\
$U_{\mathtt{Z}, \mathrm{tile,corner}} = \{(-D, -1)\}$, \\
$\mathbf{v}_{\mathtt{XZ}} = (0,0)$, 
$\mathbf{v}_{\mathtt{Z}} = (-1,1)$ \\
$\mathbf{v}_{\mathtt{h}} = (-2,0)$, 
$\mathbf{v}_{\mathtt{v}} = (-1,-1)$
}
&
\makecell{
$[[128, 7, 6]]$ \\ 
$[[162, 7, 7]]$ \\ 
$[[200, 7, 8]]$ \\ 
$[[242, 7, 9]]$ \\ 
$[[288, 7, 11]]$ \\
$[[338, 7, 13]]$ \\
$[[392, 7, 14]]$
}
\\

\hline

\includegraphics[width=1.5cm]{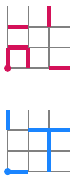}
&
\includegraphics[width=2cm]{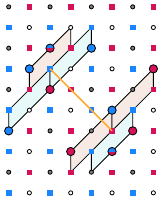}
&
\makecell{
$D = 2$, \\
$U_{\mathtt{X}, \mathrm{tile,corner}} = \emptyset$, \\
$U_{\mathtt{Z}, \mathrm{tile,corner}} = \emptyset$, \\
$\mathbf{v}_{\mathtt{XZ}} = (-1,0)$, 
$\mathbf{v}_{\mathtt{Z}} = (-1,1)$ \\
$\mathbf{v}_{\mathtt{h}} = (-2,0)$, 
$\mathbf{v}_{\mathtt{v}} = (-1,-1)$
}
&
\makecell{
$[[128, 8, 6]]$ \\
$[[162, 8, 7]]$ \\ 
$[[200, 8, 9]]$ \\ 
$[[242, 8, 10]]$ \\ 
$[[288, 8, 12]]$ \\
$[[338, 8, 13]]$ \\
$[[392, 8, 15]]$
}
\\

\hline

\includegraphics[width=1.5cm]{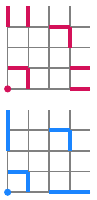}
&
\includegraphics[width=2cm]{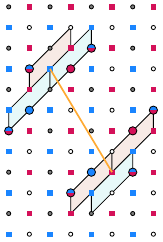}
&
\makecell{
$D = 3$, \\
$U_{\mathtt{X}, \mathrm{tile,corner}} = \{(L-D, -D)\}$, \\
$U_{\mathtt{Z}, \mathrm{tile,corner}} = \{(-D, M-D)\}$, \\
$\mathbf{v}_{\mathtt{XZ}} = (0,0)$, 
$\mathbf{v}_{\mathtt{Z}} = (-2,2)$ \\
$\mathbf{v}_{\mathtt{h}} = (-3,0)$, 
$\mathbf{v}_{\mathtt{v}} = (-2,-1)$
}
&
\makecell{
$[[242, 16, 8]]$ \\ 
$[[288, 16, 9]]$ \\
$[[338, 16, 10]]$ \\
$[[392, 16, 11]]$ \\
$[[450, 16, 14]]$
}
\\

\hline

\includegraphics[width=1.5cm]{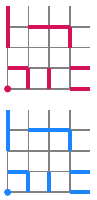}
&
\includegraphics[width=2cm]{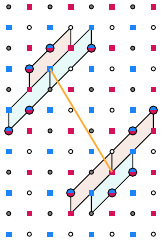}
&
\makecell{
$D = 3$, \\
$U_{\mathtt{X}, \mathrm{tile,corner}} = \{(L-D, -D)\}$, \\
$U_{\mathtt{Z}, \mathrm{tile,corner}} = \{(-D, M-D)\}$, \\
$\mathbf{v}_{\mathtt{XZ}} = (0,0)$, 
$\mathbf{v}_{\mathtt{Z}} = (-2,2)$ \\
$\mathbf{v}_{\mathtt{h}} = (-3,0)$, 
$\mathbf{v}_{\mathtt{v}} = (-2,-1)$
}
&
\makecell{
$[[162, 16, 8]]$ \\
$[[200, 16, 10]]$ \\
$[[242, 16, 11]]$ \\ 
$[[288, 16, 12]]$ \\
$[[338, 16, 13]]$ \\
$[[392, 16, 14]]$ \\
$[[450, 16, 15]]$ \\ 
$[[512, 16, 16]]$
}
\\

\end{tabular}

\caption{
Examples of barbell codes on the six-qubit star lattice + \nlc{} architecture.
The number $n$ of qubits of each code is determined as
$n = 2(L+D)(M+D)$ with $L = M$.
}

\label{tab:barbellcodetiles}

\end{table}
\section{Superdense syndrome extraction cycle}\label{sec: superdense}
In this section, we construct the superdense syndrome-extraction cycle, or simply the QEC cycle, for barbell codes and prove that it measures the parity of the barbell code stabilizers. 

\begin{table}
    \centering
    \begin{tabular}{c|c||c|c}
        Layer & Circuit & Layer & Circuit \\ \hline \hline
        $-1$ & 
        \makecell[l]{
        \textbf{for} $i \in \mathcal{B}$  \textbf{do} \\ 
        \quad $\sRX \enspace a_i^X$ \\ 
        \quad $\sR \enspace a_i^Z$ \\ 
        \textbf{end for} 
        }  
        & $r \in \{w/2 + 1,  w/2 + 2, \dots, w\}$ & 
        \makecell[l]{
        \textbf{for} $i \in \mathcal{B}$  \textbf{do} \\ 
        \quad $\sCNOT \enspace q_{i,r-w/2}^Z \enspace a_i^X$  \\
        \quad $\sCNOT \enspace q_{i, r}^Z \enspace a_i^Z$  \\
        \textbf{end for} 
        } 
        \\ \hline
        $0$ &  
        \makecell[l]{
        \textbf{for} $i \in \mathcal{B}$  \textbf{do} \\ 
        \quad $\sCNOT \enspace a_i^X \enspace a_i^Z$ \\
        \textbf{end for} 
        }
        & $w+1$ & 
        \makecell[l]{
        \textbf{for} $i \in \mathcal{B}$  \textbf{do} \\ 
        \quad $\sCNOT \enspace a_i^X \enspace a_i^Z$ \\
        \textbf{end for} 
        }
        \\ \hline
        $r \in \{1, 2, \dots, w/2\}$ &  
        \makecell[l]{
        \textbf{for} $i \in \mathcal{B}$  \textbf{do} \\ 
        \quad $\sCNOT \enspace a_i^X \enspace q_{i,r}^X$ \\
        \quad $\sCNOT \enspace a_i^Z \enspace q_{i,r + w/2}^X$ \\
        \textbf{end for} 
        } 
        & $w+2$ &
        \makecell[l]{
        \textbf{for} $i \in \mathcal{B}$  \textbf{do} \\ 
        \quad $\sMX \enspace a_i^X$ \\ 
        \quad $\sM \enspace a_i^Z$ \\ 
        \textbf{end for} 
        }  
        \\
    \end{tabular}
    \caption{The syndrome extraction cycle of the weight-$w$ barbell code. Gates $\sR$ and $\sRX$ reset qubits as $\ket{0}$ and $\ket{+}$. Gates $\sM $ and $\sMX$ measure qubits in $Z$ and $X$ basis.}
    \label{tab:barbell general qec cycle}
\end{table}

Consider a barbell code based on a tile code in which the bulk stabilizers have weight $w$. 
We denote by
\begin{align}
    \mathcal{B} = \{G_{\mathbf{u}} = (V_{\mathbf{u}}, E_{\mathbf{u}}) \mid \mathbf{u} \in U_{\mathtt{X}}\}
\end{align}
the set of barbells in the weight-$w$ barbell code; see \Cref{sec: def barbell codes} for the precise definition of barbells.
For each barbell $i = G_{\mathbf{u}} \in \mathcal{B}$, denote 
\begin{align}
    a_i^X = q(\mathtt{X}, \mathbf{u}) \,, \quad a_i^Z = q(\mathtt{Z}, \mathbf{u} + \mathbf{v}_{\mathtt{XZ}} + \mathbf{v}_{\mathtt{Z}})
\end{align} 
the ancilla $X$- and $Z$-check qubits. 
We also denote $S_i^X$ and $S_i^Z$ as the corresponding $X$- and $Z$-type stabilizers of the barbell code. 
We label the qubits in the support of these stabilizers as 
\begin{align}
    \supp(S_i^X) = \{q_{i,1}^X, \dots, q_{i,w}^X\} \qquad \text{and} \qquad \supp(S_i^Z) = \{q_{i,1}^Z, \dots, q_{i,w}^Z\} \,.
\end{align} 
Note that the supports of $S_i^X$ and $S_i^Z$ are partitioned into subsets of $\mathcal{N}(a_i^X)$ and $\mathcal{N}(a_i^Z)$, which are the subsets of data qubits adjacent to the ancilla check qubits $a_i^X$ and $a_i^Z$.
For the sake of simplicity, we assume that $w$ is even and the supports of both $S_i^X$ and $S_i^Z$ are partitioned into subsets of $\mathcal{N}(a_i^X)$ and $\mathcal{N}(a_i^Z)$ consisting of $w/2$ qubits. 
We also assume that qubits $(q_i^X)_{i=1}^w$ and $(q_i^Z)_{i=1}^w$ are ordered so that 
\begin{align}
    \mathcal{N}(a_i^X) \cap \supp(S_i^X) = \{q_1^X, \dots, q_{w/2}^X\} \,, \quad
    \mathcal{N}(a_i^Z) \cap \supp(S_i^X) = \{q_{w/2+1}^X, \dots, q_{w}^X\} \,, \\
    \mathcal{N}(a_i^X) \cap \supp(S_i^Z) = \{q_1^Z, \dots, q_{w/2}^Z\} \,, \quad
    \mathcal{N}(a_i^Z) \cap \supp(S_i^Z) = \{q_{w/2+1}^Z, \dots, q_{w}^Z\} \,.
\end{align}

We define the superdense syndrome extraction cycle of a barbell code as the circuit described in \Cref{tab:barbell general qec cycle}. 
In the first two layers (Layer $-1$ and $0$), a Bell pair on the syndrome qubits $a_i^X$ and $a_i^Z$ is prepared for each barbell $i \in \mathcal{B}$. Then, in the following $w/2$ layers (Layer $1$ to $w/2$), $\sCNOT$ gates are applied where control qubits are syndrome qubits and target qubits are data qubits. In the next $w/2$ layers (Layer $w/2 + 1$ to Layer $w$), $\sCNOT$ gates are again applied, but here, control qubits are data qubits and target qubits are syndrome qubits. Finally, in the last two layers (Layer $w+1$ and $w+2$), we perform Bell measurements on the qubits $a_i^X$ and $a_i^Z$ for each $i \in \mathcal{B}$. 
\begin{remark}
    For each layer $r \in \{1, \dots, w/2\}$ of the QEC cycle in \Cref{tab:barbell general qec cycle}, the ordering of $\sCNOT$ gates can be arbitrary, since none of the target qubits is used as a control qubit in that layer. Therefore, the QEC cycle is well-defined. 
\end{remark}
\begin{remark}
    In general, each layer of the QEC cycle may consist of multiple layers of $\sCNOT$ gates due to potential overlaps in target qubits. 
    However, one can find proper orderings of $\{q_{i, j}^X\}_{j=1}^w$ and $\{q_{i, j}^Z\}_{j=1}^w$  for all examples of barbell codes in \Cref{tab:barbellcodetiles} such that each layer has exactly one layer of $\sCNOT$ gates. 
\end{remark}

Next, we show that the syndrome extraction cycle measures stabilizers in the following theorem. 
\begin{theorem} \label{thm: correctness qec cycle}
    There exists a Pauli operator~$U_{\mathrm{frame}}$ that is a product of single-qubit Pauli operators controlled by some measurement results of the syndrome extraction cycle in \Cref{tab:barbell general qec cycle} such that the circuit consisting of the syndrome extraction cycle followed by $U_{\mathrm{frame}}$ measures all barbell code stabilizer generators $\{S_i^X\}_{i \in \mathcal{B}}$ and $\{S_i^Z\}_{i \in \mathcal{B}}$ while preserving the logical information. 
    Explicitly, the following statements hold.
    \begin{enumerate}
        \item Through the syndrome extraction cycle followed by the Pauli operator~$U_{\mathrm{frame}}$, the single-qubit $X_{a_i^X}$ evolves into $(-1)^{m_i^X} S_i^X$, and the single-qubit $Z_{a_i^Z}$ evolves into $(-1)^{m_i^Z} S_i^Z$, where $m_i^X$ and $m_i^Z$ are the measurement results on the qubit $a_i^X$ and $a_i^Z$ in the syndrome extraction cycle. 
        \item For each logical Pauli operator~$L$, consider a Pauli subgroup generated by all stabilizers and $L$. 
        Then, the Pauli subgroup evolves into itself via the syndrome-extraction circuit, followed by $U_{\mathrm{frame}}$. 
    \end{enumerate}

\end{theorem}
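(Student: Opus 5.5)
The plan is to track everything in the stabilizer (Heisenberg) picture, layer by layer, using the Clifford conjugation rules for $\sCNOT$:
\[
X_c \mapsto X_c X_t, \qquad Z_t \mapsto Z_c Z_t, \qquad X_t \mapsto X_t, \qquad Z_c \mapsto Z_c .
\]
Because distinct barbells use disjoint ancilla pairs, each layer is a product of commuting $\sCNOT$s. By the first remark following \Cref{tab:barbell general qec cycle}, the ordering inside a layer is irrelevant. I will therefore analyse a single barbell $i$ and then check that the cross-barbell interactions on shared data qubits cancel or commute.

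\textbf{Part 1.} After layers $-1,0$ the ancilla pair is in the Bell state stabilized by $X_{a_i^X}X_{a_i^Z}$ and $Z_{a_i^X}Z_{a_i^Z}$. I will propagate these two stabilizers forward.
\begin{itemize}
\item In layers $1,\dots,w/2$, $a_i^X$ and $a_i^Z$ are controls onto $q^X_{i,1..w/2}$ and $q^X_{i,w/2+1..w}$ respectively. Thus $X_{a_i^X}X_{a_i^Z}$ picks up $\prod_{j=1}^{w} X_{q^X_{i,j}}=S_i^X$, while $Z_{a_i^X}Z_{a_i^Z}$ is unchanged.
\item In layers $w/2+1,\dots,w$ the data qubits $q^Z_{i,\cdot}$ are controls onto the ancillas. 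Thus $Z_{a_i^X}Z_{a_i^Z}$ picks up $S_i^Z$. Meanwhile $X_{a_i^X}X_{a_i^Z}$ also picks up $X$ factors on the data qubits $q^Z_{i,\cdot}$ it hits twice; these cancel, since each $q^Z_{i,j}$ targets only one ancilla per layer.
\item Cross terms from other barbells $i'$ acting on data qubits shared with $S_i^X$ are the delicate point. An $X$ on data qubit $q$ becomes $X_q X_{a^X_{i'}}$ or $X_q X_{a^Z_{i'}}$ when $q$ is a control for barbell $i'$ in the second half.
\item To handle these, I will instead track the ancilla-side operators. The Bell measurement in layers $w+1,w+2$ measures $X_{a_i^X}X_{a_i^Z}$ (as $\sMX$ on $a_i^X$ after the final $\sCNOT$) and $Z_{a_i^X}Z_{a_i^Z}$ (as $\sM$ on $a_i^Z$). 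Running the circuit backward, the measured observable $X_{a_i^X}$ at the end equals, at the start of the data interaction, $S_i^X$ times a product of $X_{a^\cdot_{i'}}$ over ancillas whose $Z$-stabilizers anticommute locally with single qubits of $S_i^X$.
\item That product has even overlap with each $S_{i'}^Z$ because $[S_i^X,S_{i'}^Z]=0$ (tile code commutation). So each $a^\cdot_{i'}$ appears an even number of times per Bell pair, and the product reduces to products of $X_{a^X_{i'}}X_{a^Z_{i'}}$. These act trivially on the prepared Bell states.
\end{itemize}
This is the step I expect to be the main obstacle. One must carefully count, per barbell $i'$, how many of $\supp(S_i^X)\cap\supp(S_{i'}^Z)$ are wired to $a^X_{i'}$ versus $a^Z_{i'}$, and show that any leftover odd parity is exactly absorbed by a product of Bell stabilizers. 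If a residual sign remains, it is a deterministic function of earlier measurement outcomes, and this is what $U_{\mathrm{frame}}$ is for.

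Concretely, I will define $U_{\mathrm{frame}}$ as the Pauli correction undoing the ancilla-induced kickback. It consists of $Z$s on $\supp(S_i^X)$-type data controlled by $m_i^Z$, and $X$s on data controlled by $m_i^X$, chosen as in the standard superdense analysis of Gidney. Its role is to ensure that the data state after measurement is the projection $\tfrac12(1+(-1)^{m}S)$ and not a Pauli-twisted version. With this choice, $X_{a_i^X}$ at the measurement evolves to $(-1)^{m_i^X}S_i^X$, and similarly $Z_{a_i^Z}$ to $(-1)^{m_i^Z}S_i^Z$, which is statement 1.

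\textbf{Part 2.} For a logical operator $L$ I will propagate it through the circuit. $L$ commutes with every $S_{i}^{X},S_i^Z$, so by the same even-overlap counting its ancilla kickback is a product of Bell stabilizers, together with $X_{a_i^X}$ and $Z_{a_i^Z}$ terms that become measured signs. After $U_{\mathrm{frame}}$, $L$ therefore maps to $L$ times a product of stabilizers and measured signs. The stabilizers map to themselves up to sign by Part 1, so the group $\langle S,L\rangle$ is preserved.
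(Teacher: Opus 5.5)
The overall route is the paper's: tracking stabilizers through the cycle, plus the parity argument that $[S_i^X,S_j^Z]=0$ forces the parts of $\supp(S_i^X)\cap\supp(S_j^Z)$ wired to $a_j^X$ and to $a_j^Z$ to have equal parity, so the kickback comes in pairs $X_{a_j^X}X_{a_j^Z}$. \textbf{The gap is in what happens to those pairs, and therefore in the frame.}

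Statement~1 is about the forward evolution of $X_{a_i^X}$.
\begin{itemize}
\item After layer $w/2$ it is $S_i^X X_{a_i^X}X_{a_i^Z}$. In the second half the factor that spreads is $S_i^X$, because data qubits are controls. Your remark about $X_{a_i^X}X_{a_i^Z}$ ``picking up data factors'' is backwards: $X$ on a target is invariant.
\item After layer $w$ it is $S_i^X X_{a_i^X}X_{a_i^Z}\prod_{j\in\mathcal{B}^Z_{11}(S_i^X)}X_{a_j^X}X_{a_j^Z}$. The extra pairs do \emph{not} act trivially. For a general input, the reset stabilizer $X_{a_j^X}$ has by then evolved into $S_j^X X_{a_j^X}X_{a_j^Z}\cdots$, so $X_{a_j^X}X_{a_j^Z}$ alone is no longer a stabilizer.
\item After layer $w+1$ each pair becomes the measured $X_{a_j^X}$. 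The result is $(-1)^{m_i^X+\sum_{j\in\mathcal{B}^Z_{11}(S_i^X)}m_j^X}S_i^X$, and the same twist hits every logical $L^X$.
\end{itemize}

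Your absorption-into-Bell-stabilizers argument works only on the $Z$ side. In the first half, $L^Z$ picks up $\prod_j Z_{a_j^X}Z_{a_j^Z}$, and these are still the evolved reset stabilizers $Z_{a_j^Z}$.

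The backward-Heisenberg bullet does not help either. Pulling $X_{a_i^X}$ back to the start of the data interaction gives exactly $S_i^X X_{a_i^X}X_{a_i^Z}$, with no cross terms. That shows $m_i^X$ reports the input syndrome, but it says nothing about back-action on the post-measurement state, which is what statement~1 asserts.

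Consequently your $U_{\mathrm{frame}}$ is wrong, not merely vague. The sign to remove sits on $X$-type operators and depends only on the outcomes $m^X$. So the correction must be $Z$-type, controlled by $m^X$, and supported on the controls that kicked back onto $a_k^X$.

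The paper takes
\[
U_{\mathrm{frame}}=\prod_{k\in\mathcal{B}}\prod_{q\in\mathcal{N}(a_k^X)\cap\supp(S_k^Z)}Z_q^{m_k^X} .
\]
By the parity lemma, this flips the sign of $S_i^X$ (resp.\ $L^X$) exactly for $k\in\mathcal{B}^Z_{11}(S_i^X)$ (resp.\ $k\in\mathcal{B}^Z_{11}(L^X)$).

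Your choice has the roles swapped. $Z$'s on $\supp(S_i^X)$ controlled by $m_i^Z$ cannot cancel an $m^X$-dependent sign. $X$'s on data controlled by $m^X$ would put spurious signs on $S_i^Z$ and $L^Z$, which are never twisted: $Z_{a_i^Z}\mapsto(-1)^{m_i^Z}S_i^Z$ with no correction at all.

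So the evolution $X_{a_i^X}\mapsto(-1)^{m_i^X}S_i^X$ in statement~1 is asserted rather than derived. Part~2 inherits the same defect for $L^X\mapsto L^X$.
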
 
Before we prove \Cref{thm: correctness qec cycle}, we briefly explain the role of the Pauli operator~$U_{\mathrm{frame}}$ in \Cref{thm: correctness qec cycle}. 
The Pauli operator~$U_{\mathrm{frame}}$ is used to match the Pauli frame with measurement results. 
Precisely, if we do not apply $U_{\mathrm{frame}}$ after the syndrome extraction cycle, the single-qubit Pauli~$X_{a_i^X}$ evolves into $(-1)^{m_i^X + M}S_i^X$, where $M \in \{0,1\}$ is the parity of the measurement results on some $X$-check qubits.
In the proof of \Cref{thm: correctness qec cycle}, we construct the Pauli~$U_{\mathrm{frame}}$ and show that 
\begin{align}
    U_{\mathrm{frame}} \cdot \left( (-1)^{m_i^X + M} S_i^X \right) \cdot U_{\mathrm{frame}}^\dagger = (-1)^{m_i^X} S_i^X \,,
\end{align}
implying that $\{m_i^X\}_{i \in \mathcal{B}}$ are syndromes of $X$-type stabilizers. 
Note that in practice, we do not need to physically apply the Pauli operator~$U_{\mathrm{frame}}$ because the Pauli frame can be tracked in software.

To prove \Cref{thm: correctness qec cycle}, we need the following lemma and corollary.
\begin{lemma}\label{lem:odd odd overlap}
    Let $P^X$ and $P^Z$ be $X$- and $Z$-type Pauli operators on data qubits and $p, q \in \{0,1\}$. Define
    \begin{align}
      \mathcal{B}^Z_{pq}(P^X) := \{ j \in \mathcal{B} \mid |\mathcal{N}(a_j^X) \cap \supp(S_j^Z) \cap \supp(P^X)|  \equiv p \quad (\modular 2) \quad \text{and} \nonumber\\
      |\mathcal{N}(a_j^Z) \cap \supp(S_j^Z) \cap \supp(P^X)| \equiv q \quad (\modular 2)
      \} \,, \\
      \mathcal{B}^X_{pq}(P^Z) := \{ j \in \mathcal{B} \mid |\mathcal{N}(a_j^X) \cap \supp(S_i^X) \cap \supp(P^Z)|  \equiv p \quad (\modular 2) \quad \text{and} \nonumber\\ 
      |\mathcal{N}(a_j^Z) \cap \supp(S_i^X) \cap \supp(P^Z)| \equiv q \quad (\modular 2)
      \} \,.
    \end{align}
    Then, the following statements hold:
    \begin{enumerate}[label=\roman*)]
        \item If $P^X$ commutes with all $Z$-type stabilizers $S_j^Z$, then $\mathcal{B}$ is partitioned into disjoint subsets $\mathcal{B}_{00}^Z(P^X)$ and $\mathcal{B}_{11}^Z(P^X)$.
        \item If $P^Z$ commutes with all $X$-type stabilizers $S_j^X$, then $\mathcal{B}$ is partitioned into disjoint subsets $\mathcal{B}_{00}^X(P^Z)$ and $\mathcal{B}_{11}^X(P^Z)$.
    \end{enumerate}
\end{lemma}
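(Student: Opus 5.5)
The plan is to reduce both statements to the elementary fact that an $X$-type and a $Z$-type Pauli operator commute exactly when their supports overlap in an even number of qubits. The splitting of the stabilizer support assumed just before \Cref{tab:barbell general qec cycle} then does the rest. I treat i); statement ii) is the same argument with the roles of $X$ and $Z$ swapped. I read the index $i$ in $\supp(S_i^X)$ in the definition of $\mathcal{B}^X_{pq}(P^Z)$ as a typo for $j$.

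\textbf{Step 1.} Fix $j \in \mathcal{B}$. Since $P^X$ is $X$-type and $S_j^Z$ is $Z$-type, they commute if and only if $|\supp(S_j^Z) \cap \supp(P^X)| \equiv 0 \pmod 2$. By hypothesis this holds for every $j$.

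\textbf{Step 2.} By the standing assumption, $\supp(S_j^Z)$ is the disjoint union of
\[
A_j := \mathcal{N}(a_j^X) \cap \supp(S_j^Z) = \{q_{j,1}^Z,\dots,q_{j,w/2}^Z\}
\quad\text{and}\quad
B_j := \mathcal{N}(a_j^Z) \cap \supp(S_j^Z) = \{q_{j,w/2+1}^Z,\dots,q_{j,w}^Z\}.
\]
These are the two halves of a single enumeration of the $w$ distinct qubits of the support, so $A_j \cap B_j = \emptyset$ and $A_j \cup B_j = \supp(S_j^Z)$. Intersecting with $\supp(P^X)$ therefore gives
\[
|A_j\cap\supp(P^X)| + |B_j\cap\supp(P^X)| = |\supp(S_j^Z)\cap\supp(P^X)| \equiv 0 \pmod 2 .
\]

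\textbf{Step 3.} The two summands have the same parity, so every $j$ lies in $\mathcal{B}^Z_{00}(P^X)$ or in $\mathcal{B}^Z_{11}(P^X)$. These two sets are disjoint by definition, since the pair of parities is uniquely determined by $j$, and $\mathcal{B}^Z_{01}(P^X) = \mathcal{B}^Z_{10}(P^X) = \emptyset$. This is exactly the claimed partition.

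\textbf{Main point of care.} The only delicate step is the disjointness in Step 2. A data qubit could in principle be adjacent to both $a_j^X$ and $a_j^Z$, so $\mathcal{N}(a_j^X)$ and $\mathcal{N}(a_j^Z)$ need not be disjoint as sets. I would therefore argue explicitly that the stated ordering assigns each support qubit to exactly one half. If that were in doubt, the fallback is to redefine the two sets via the index ranges $\{1,\dots,w/2\}$ and $\{w/2+1,\dots,w\}$, which is how they are actually used in the circuit of \Cref{tab:barbell general qec cycle}. Dummy qubits are absent from the supports, so they do not affect any of the counts.
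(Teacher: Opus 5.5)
Your proof is correct and follows the paper's argument: commutation of $P^X$ with $S_j^Z$ forces $|\supp(S_j^Z)\cap\supp(P^X)|$ to be even, and the standing assumption that $\supp(S_j^Z)$ splits into its disjoint $X$-check and $Z$-check parts (which you rightly identify as what supplies the disjointness) makes the two parities equal, so $\mathcal{B}^Z_{01}(P^X)$ and $\mathcal{B}^Z_{10}(P^X)$ are empty. Your reading of $S_i^X$ as $S_j^X$ in the definition of $\mathcal{B}^X_{pq}(P^Z)$ is also the intended one.
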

\begin{proof}
    We prove the statement for $P^X$; the other follows by swapping $X$ and $Z$.
    By definition, $\mathcal{B}$ is partitioned into four disjoint subsets $\mathcal{B}_{pq}^Z$, $p, q \in \{0,1\}$. 
    Using the fact that $P^X$ commutes with $S_j^Z$ for all $j \in \mathcal{B}$, and since the support of $S_j^Z$ is partitioned into the $X$-check part $\mathcal{N}(a_j^X) \cap \supp(S_j^Z)$ and the $Z$-check part~$\mathcal{N}(a_j^Z) \cap \supp(S_j^Z)$, then the overlap of the support of $P^X$ and the $X$-check part of $S_j^Z$ has the same parity as that of the support of $P^X$ and the $Z$-check part of $S_j^Z$. Therefore, $\mathcal{B}_{01}^Z(P^X)$ and $\mathcal{B}_{10}^Z(P^X)$ are empty.
\end{proof}
\begin{corollary}\label{cor: odd odd}
    Let $U_Z$ be the product of $\sCNOT$ gates from Layer $w/2+1$ to $w$ of the QEC cycle in \Cref{tab:barbell general qec cycle}. Then, for any $X$-type Pauli operator~$P^X$ commuting with all $Z$-type stabilizers~$\{S_i^Z\}_{i \in \mathcal{B}}$, we have
    \begin{align}\label{eq: cor to show with PX}
        U_Z P^X {U_Z}^\dagger = P^X \prod_{j \in \mathcal{B}^Z_{11}(P^X)} X_{a_j^X} X_{a_j^Z} \,.
    \end{align}
    Analogously, for any $Z$-type Pauli operator~$P^Z$ commuting with all $X$-type stabilizers~$\{S_i^X\}_{i \in \mathcal{B}}$, we have 
    \begin{align} \label{eq: cor to show with PZ}
        U_{X} P^Z U_{X}^\dagger = P^Z \prod_{j \in \mathcal{B}^X_{11}(P^Z)} Z_{a_j^X} Z_{a_j^Z} \,,
    \end{align}
    where $U_{X}$ is the product of $\sCNOT$ gates from Layer $1$ to $w/2$ in \Cref{tab:barbell qec cycle}.
\end{corollary}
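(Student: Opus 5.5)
The plan is a direct Heisenberg-picture computation. First we note that $U_Z$ is a product of mutually commuting gates, and then we track how each single-qubit $X$ factor of $P^X$ propagates. Lemma~\ref{lem:odd odd overlap} then removes the unwanted mixed terms.

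\textbf{Step 1: $U_Z$ is a product of commuting gates.} By \Cref{tab:barbell general qec cycle}, every $\sCNOT$ in Layers $w/2+1,\dots,w$ has a data qubit as control and an ancilla qubit ($a_i^X$ or $a_i^Z$) as target. No qubit is ever both a control and a target, so all these gates commute. Hence $U_Z$ does not depend on the layer ordering, and we may conjugate by the gates one at a time in any order. Reading off the table for a fixed barbell $j$:
\begin{itemize}
\item the data qubits controlling a gate onto $a_j^X$ are exactly $\mathcal{N}(a_j^X)\cap\supp(S_j^Z)$;
\item those controlling a gate onto $a_j^Z$ are exactly $\mathcal{N}(a_j^Z)\cap\supp(S_j^Z)$.
\end{itemize}
Here we use the standing assumption that $\supp(S_j^Z)$ is partitioned between the two neighbourhoods.

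\textbf{Step 2: propagate each single-qubit factor.} Use the rule $\sCNOT\,(X_c)\,\sCNOT^\dagger = X_c X_t$, while an $X$ on a qubit not touched by the gate is unchanged. For a data qubit $q$ this gives
\begin{align}
U_Z X_q U_Z^\dagger = X_q \prod_{j:\, q\in \mathcal{N}(a_j^X)\cap\supp(S_j^Z)} X_{a_j^X} \prod_{j:\, q\in \mathcal{N}(a_j^Z)\cap\supp(S_j^Z)} X_{a_j^Z} .
\end{align}
This holds even when $q$ belongs to several barbells, because every relevant gate just contributes its own target factor.

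\textbf{Step 3: take the product over $\supp(P^X)$.} Write $P^X=\prod_{q\in\supp(P^X)}X_q$ and multiply the expressions from Step 2; all factors commute. The exponent of $X_{a_j^X}$ is $|\mathcal{N}(a_j^X)\cap\supp(S_j^Z)\cap\supp(P^X)| \bmod 2$. The exponent of $X_{a_j^Z}$ is the analogous count for $\mathcal{N}(a_j^Z)$. In other words, barbell $j$ contributes $X_{a_j^X}^{p}X_{a_j^Z}^{q}$ exactly when $j\in\mathcal{B}^Z_{pq}(P^X)$.

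\textbf{Step 4: remove the mixed terms.} Since $P^X$ commutes with all $S_j^Z$, Lemma~\ref{lem:odd odd overlap}\,i) says every $j$ lies in $\mathcal{B}^Z_{00}(P^X)$ or $\mathcal{B}^Z_{11}(P^X)$. The former contribute nothing, and the latter contribute $X_{a_j^X}X_{a_j^Z}$. This yields \eqref{eq: cor to show with PX}.

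\textbf{The $Z$-type statement.} The argument is symmetric. In Layers $1,\dots,w/2$ the controls are the ancillas and the targets are data qubits, so again all gates commute. We use $\sCNOT\,(Z_t)\,\sCNOT^\dagger = Z_cZ_t$, which moves each $Z_q$ onto $Z_{a_j^X}$ for $q\in\mathcal{N}(a_j^X)\cap\supp(S_j^X)$ and onto $Z_{a_j^Z}$ for $q\in\mathcal{N}(a_j^Z)\cap\supp(S_j^X)$. Lemma~\ref{lem:odd odd overlap}\,ii) then gives \eqref{eq: cor to show with PZ}. The reference to \Cref{tab:barbell qec cycle} in the statement is just the $w=8$ instance of \Cref{tab:barbell general qec cycle}.

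\textbf{Main obstacle.} The only delicate point is the bookkeeping in Step 1: the gates targeting $a_j^X$ and $a_j^Z$ must be exactly those from the respective neighbourhood parts of $\supp(S_j^Z)$, with no double counting. This needs the partition assumption and a careful reading of the index conventions in \Cref{tab:barbell general qec cycle}.
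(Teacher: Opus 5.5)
Your proposal is correct and is essentially the paper's proof. You propagate each $X_q$ through the data-controlled $\sCNOT$s onto $X_{a_j^X}$ or $X_{a_j^Z}$, according to which part of $\supp(S_j^Z)$ contains $q$, then multiply over $\supp(P^X)$ and use Lemma~\ref{lem:odd odd overlap} to discard $\mathcal{B}^Z_{10}(P^X)$ and $\mathcal{B}^Z_{01}(P^X)$; your commutation check in Step~1 is a harmless extra the paper leaves implicit. One caveat on the $Z$-part: the printed $w=8$ table pairs $a_i^Z$ with $q^X_{i,r}$ and $q^Z_{i,r-4}$ rather than $q^X_{i,r+4}$ and $q^Z_{i,r}$, so reading it as the $w=8$ instance of \Cref{tab:barbell general qec cycle} corrects a typo rather than literally specializing, though it is clearly the intended reading.
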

\begin{proof}
    We only prove that \Cref{eq: cor to show with PX} holds, since \Cref{eq: cor to show with PZ} can be proved by the analogous argument with swapping $X$ and $Z$.  
    For each $j \in \mathcal{B}$, $\sCNOT$ gates in $U_{Z}$ acting on the qubits $a_j^X$ and $a_j^Z$ are those having control qubits in the $X$-check and $Z$-check part of $S_j^Z$, respectively, and the support of $S_j^Z$ comprises those control qubits.
    Thus, for any data qubit $q$, 
    \begin{align}\label{eq:Xq propagation}
        U_{Z} X_q U_{Z}^\dagger = 
        X_q 
        \left(
        \prod_{j \in \mathcal{B}: q \in \mathcal{N}(a_j^X) \cap \supp(S_j^Z)} X_{a_j^X}
        \right)
        \left(
        \prod_{j \in \mathcal{B}: q \in \mathcal{N}(a_j^Z) \cap \supp(S_j^Z)} X_{a_j^Z}
        \right)\,.
    \end{align} 
    Decomposing $P^X$ into the product of single-qubit Paulis $\{X_q\}_{q \in \supp(P^X)}$ and using \Cref{eq:Xq propagation} for each $q \in \supp(P^X)$, we have 
    \begin{align}
        U_{Z} P^X U_{Z}^\dagger &= P^X 
        \left( \prod_{ j \in \mathcal{B}_{10}^Z(P^X) \cup \mathcal{B}_{11}^Z(P^X)} X_{a_j^X} \right) \left( \prod_{ j \in \mathcal{B}_{01}^Z(P^X) \cup \mathcal{B}_{11}^Z(P^X)} X_{a_j^Z} \right) \\
        &= P^X \prod_{j \in \mathcal{B}_{11}^Z(P^X)} X_{a_j^X} X_{a_j^Z} \,, \label{eq: corollary SX propagation result}
    \end{align}
    where \Cref{eq: corollary SX propagation result} holds by \Cref{lem:odd odd overlap}.
\end{proof}

\begin{proof}[Proof of Theorem 1]
    To verify the correctness of the QEC cycle, we first need to track the evolution of single-qubit Paulis $X_{a_i^X}$ and $Z_{a_i^Z}$ at the ancilla qubits $a_i^X$ and $a_i^Z$ for each $i \in \mathcal{B}$.

    We observe that the single-qubit Pauli ${X_{a_i^X}}$ evolves as follows:
    \begin{align}
        X_{a_i^X} &\mapsto X_{a_i^X} X_{a_i^Z} \quad \text{(after Layer $0$)} \label{eq: Xa map 1} \\
        &\mapsto S_i^X X_{a_i^X} X_{a_i^Z} \quad \text{(after Layer $w/2$)} \label{eq: Xa map 2} \\ 
        &\mapsto S_i^X X_{a_i^X} X_{a_i^Z} \prod_{j \in \mathcal{B}_{11}^Z(S_i^X)} (X_{a_j^X} X_{a_j^Z}) \quad \text{(after Layer $w$)} \label{eq: Xa map 3} \\ 
        &\mapsto S_i^X X_{a_i^X} \prod_{j \in \mathcal{B}^Z_{11}(S_i^X)} X_{a_j^X} \quad \text{(after Layer $w+1$)} \label{eq: stab Xpi before measurement}
    \end{align}
    Here, \Cref{eq: Xa map 1,eq: Xa map 2,eq: stab Xpi before measurement} clearly hold by the construction of the QEC cycle, and \Cref{eq: Xa map 3} holds by \Cref{cor: odd odd}. 
    Denote $m_i^X \in \{0,1\}$ the outcome of the single-qubit measurement on the qubit $a_i^X$ in Layer~$w+2$.  
    Then, by measurements the stabilizer in \Cref{eq: stab Xpi before measurement} is transformed into
    \begin{align} \label{eq: unfixed stabilizer}
        S_i^X X_{a_i^X} \prod_{j \in \mathcal{B}^Z_{11}(S_i^X)} X_{a_j^X} \mapsto
        (-1)^{m_i^X + \sum_{j \in \mathcal{B}^Z_{11}(S_i^X)} m_j^X} S_i^X \,.
    \end{align}
    We observe that the QEC cycle measures the stabilizer $S_i^X$, and the corresponding syndrome is the parity of 
    \begin{align}
        m_i^X + \sum_{j \in \mathcal{B}^Z_{11}(S_i^X)} m_j^X \,.
    \end{align}
    Instead of tracking this ``Pauli frame'' for each QEC cycle, one can virtually apply the ``Pauli frame correction operator'' $U_{\mathrm{frame}}$ defined as 
    \begin{align}
        U_{\mathrm{frame}} = U_{\mathrm{frame}}[\{m_i^X\}_{i \in \mathcal{B}}] :=& \prod_{k \in \mathcal{B}} U_{\mathrm{frame},k} \,,
    \end{align}
    where
    $
        U_{\mathrm{frame},k} = U_{\mathrm{frame},k}[\{m_i^X\}_{i \in \mathcal{B}}] := \prod_{q \in \mathcal{N}(a_k^X) \cap \supp(S_k^Z)} Z_q^{m_k^X}
    $.
    By the definition of $\mathcal{B}_{11}^Z(S_i^X)$, we have 
    \begin{align}
    U_{\mathrm{frame}, k} \cdot S_i^X \cdot U_{\mathrm{frame},k}^\dagger=
    \begin{cases}
    (-1)^{m_k^X}S_i^X & \text{ if } k \in \mathcal{B}^Z_{11}(S_i^X),\\
    S_i^X & \text{ otherwise. }
    \end{cases}
    \end{align}
    Therefore, by conjugating $S_i^X$ with $U_{\mathrm{frame},k}$ for all $k$ repeatedly, we have
    \begin{align}\label{eq: SX propagation with frame}
        U_{\mathrm{frame}} \cdot \left( (-1)^{m_i^X + \sum_{j \in \mathcal{B}_{11}(S_i^X)}{m^X_j}}S_i^X \right) \cdot U_{\mathrm{frame}}^\dagger = (-1)^{m^X_i} S_i^X \,.
    \end{align}
    That is, the measurement result $m_i^X$ is indeed the syndrome of the stabilizer $S_i^X$ after the Pauli frame correction by $U_{\mathrm{frame}}$.

    Next, we track the transformation of $Z_{a_i^Z}$ over the QEC cycle as we did for $X_{a_i^X}$. 
    Denote $\{m_i^Z\}_{i \in \mathcal{B}}$ the measurement results  of the qubits $\{a_i^Z\}_{i \in \mathcal{B}}$. From the construction of the QEC cycle, one can observe that the single-qubit Pauli $Z_{a_i^Z}$ evolves as follows:
    \begin{align}
        Z_{a_i^Z} &\mapsto Z_{a_i^X} Z_{a_i^Z} \quad \text{(after Layer 0)} \label{eq: Za map 1} \\
        &\mapsto Z_{a_i^X} Z_{a_i^Z} \quad \text{(after Layer $w/2$)} \label{eq: Za map 2} \\ 
        &\mapsto S_i^Z Z_{a_i^X} Z_{a_i^Z} \quad \text{(after Layer $w$)} \label{eq: Za map 3} \\ 
        &\mapsto (-1)^{m_i^Z} S_i^Z \quad \text{(after Layer $w+2$) } \label{eq: stab Zi after measurement}
    \end{align}
    Since $U_{\mathrm{frame}}$ is a Pauli $Z$ operator, we have 
    \begin{align}
        U_{\mathrm{frame}} \cdot \left( (-1)^{m_i^Z} S_i^Z \right) \cdot U_{\mathrm{frame}}^\dagger = (-1)^{m_i^Z} S_i^Z \,.
    \end{align}
    That is, the measurement result $m_i^Z$ is the syndrome of the stabilizer $S_i^Z$, regardless of $U_{\mathrm{frame}}$. 
    
    Next, we prove the statement about the logical operator. 
    For each logical operator $L$, let $\mathcal{S} = \mathcal{S}[L]$ be the stabilizer group generated by $L$ and all stabilizers of the code, and consider the group $\mathcal{S}'$ obtained by evolving all elements of $\mathcal{S}$ through the QEC cycle and the Pauli frame correction. 
    We want to show that $\mathcal{S} = \mathcal{S}'$, and for that, it suffices to check $L \in \mathcal{S}'$. This is because one can also choose $L$ as a stabilizer in $\mathcal{S}$, which is a trivial logical operator. 
    We verify that $L \in \mathcal{S}'$ holds for both logical $X$ and $Z$ operators~$L=L^X$ and $L=L^Z$, which are products of $X$- and $Z$-type single-qubit Paulis, respectively, since barbell codes are quantum CSS codes.
    
    Through the QEC cycle and the Pauli frame correction operator, the evolution of $L^X$ is given by
    \begin{align}
        L^X &\mapsto L^X \quad  \text{(after Layer $w/2$)} \label{eq: LX mapsto 1} \\
        &\mapsto L^X \prod_{j \in \mathcal{B}^Z_{11}(L^X) }X_{a_j^X}X_{a_j^Z} \quad \text{(after Layer $w$)} \label{eq: LX mapsto 2} \\ 
        &\mapsto L^X \prod_{j \in \mathcal{B}^Z_{11}(L^X)} X_{a_j^X} \quad \text{(after Layer $w+1$)} \label{eq: LX mapsto 3} \\
        &\mapsto (-1)^{\sum_{j \in \mathcal{B}^Z_{11}(L^X)} m_j^X} L^X \quad \text{(after Layer $w+2$)} \label{eq: LX mapsto 4} \\ 
        &\mapsto L^X \quad \text{(after $U_{\mathrm{frame}}$)} \label{eq: LX propagation through frame}
    \end{align}
    Here, \Cref{eq: LX mapsto 1,eq: LX mapsto 3,eq: LX mapsto 4} are implied by the construction of the QEC cycle, \Cref{eq: LX mapsto 2} holds by \Cref{cor: odd odd}, and \Cref{eq: LX propagation through frame} holds by \Cref{eq: SX propagation with frame} together with replacing $S_i^X$ with $L^X$. Thus, $L^X \in \mathcal{S}'$.

    Next, we track the transformation of a logical $Z$ operator $L^Z$. We have 
    \begin{align}
        L^Z &\mapsto L^Z \quad \text{(after Layer $0$)} \label{eq: LZ mapsto 0} \\
        &\mapsto L^Z \prod_{j \in \mathcal{B}_{11}^X(L^Z)} Z_{a_j^X} Z_{a_j^X} \quad \text{(after Layer $w/2$).} \label{eq: LZ mapsto}
    \end{align}
    \Cref{eq: LZ mapsto 0} holds by the construction of the QEC cycle, and \Cref{eq: LZ mapsto} follows from \Cref{cor: odd odd}.
    We also observe from the construction of the QEC cycle that $\prod_{j \in \mathcal{B}_{11}^X(L^Z)} Z_{a_j^Z}$ evolves to
    \begin{align}
        \prod_{j \in \mathcal{B}_{11}^X(L^Z)} Z_{a_j^Z} 
        &\mapsto 
        \prod_{j \in \mathcal{B}_{11}^X(L^Z)} Z_{a_j^X} Z_{a_j^Z} \quad \text{(after Layer $0$)} \label{eq: ajZ evolves 0}  \\  
        &\mapsto 
        \prod_{j \in \mathcal{B}_{11}^X(L^Z)} Z_{a_j^X} Z_{a_j^Z} \quad \text{(after Layer $w/2$)} \label{eq: ajZ evolves 1}
    \end{align}
    From \Cref{eq: LZ mapsto,eq: LZ mapsto 0,eq: ajZ evolves 0,eq: ajZ evolves 1}, we see that the product $L^Z \prod_{j \in \mathcal{B}_{11}^X(L^Z)} Z_{a_j^Z}$ gets transformed into  
    \begin{align}
        L^Z \prod_{j \in \mathcal{B}_{11}^X(L^Z)} Z_{a_j^Z} 
        &\mapsto L^Z \quad \text{(after Layer $w/2$)} \label{eq: LZ and ancilla evolution 1} \\ 
        &\mapsto L^Z \quad \text{(after Layer $w+2$)} \label{eq: LZ and ancilla evolution 2}  \\
        &\mapsto L^Z \quad \text{(after $U_{\mathrm{frame}}$)} \label{eq: LZ and ancilla evolution 3} \,,
    \end{align}
    where \Cref{eq: LZ and ancilla evolution 2} and \Cref{eq: LZ and ancilla evolution 3} are implied by the construction of the QEC cycle.
    Consider the stabilizer group $\mathcal{S}_1$ generated by $\mathcal{S}$ and single-qubit Paulis $\{X_{a_j^X}\}_{j \in \mathcal{B}}$ and $\{Z_{a_j^Z}\}_{j \in \mathcal{B}}$. 
    That is, $\mathcal{S}_1$ is obtained by evolving $\mathcal{S}$ through Layer $-1$ of the syndrome extraction cycle. 
    Since $L^Z \in \mathcal{S}_1$ and $Z_{a_j^Z} \in \mathcal{S}_1$ for all $j \in \mathcal{S}_1$, we have $L^Z \prod_{j \in \mathcal{B}_{11}^X(L^Z)} Z_{a_j^Z} \in \mathcal{S}_1$ as $\mathcal{S}_1$ is closed under multiplication.
    Therefore, \Cref{eq: LZ and ancilla evolution 1,eq: LZ and ancilla evolution 2,eq: LZ and ancilla evolution 3} imply that $L^Z \in \mathcal{S}'$.
\end{proof}

\end{document}